\theoremstyle{definition}
\newtheorem{assumption}{Assumption}
\newtheorem{thm}{Theorem}
\newtheorem{corollary}{Corollary}
\newtheorem*{conjecture}{Conjecture}
\newtheorem{lemma}{Lemma}
\DeclareMathAlphabet{\mathpzc}{OT1}{pzc}{m}{it}
\newcommand{\dd}{\mathrm{d}}
\newcommand{\ee}{\mathrm{e}}
\newcommand{\ii}{\mathrm{i}}
\begin{document}
\title{The Second Law of Thermodynamics under Unitary Evolution and External Operations}
\author{Tatsuhiko N. Ikeda}
\affiliation{Department of Physics, University of Tokyo, Bunkyo-ku, Tokyo 113-0033, Japan}
\affiliation{Physics Department, Boston University, Boston, MA 02215, USA}
\author{Naoyuki Sakumichi}
\affiliation{Theoretical Research Division, Nishina Center, RIKEN, Wako, Saitama 351-0198, Japan}
\author{Anatoli Polkovnikov}
\affiliation{Physics Department, Boston University, Boston, MA 02215, USA}
\author{Masahito Ueda}
\affiliation{Department of Physics, University of Tokyo, Bunkyo-ku, Tokyo 113-0033, Japan}
\date{\today}
\begin{abstract}
A microscopic definition of the thermodynamic entropy in an isolated quantum system must satisfy (i) additivity, (ii) extensivity and (iii) the second law of thermodynamics. We show that the diagonal entropy, which is the Shannon entropy in the energy eigenbasis at each instant of time, meets the first two requirements and that the third requirement is satisfied if an arbitrary external operation is performed at typical times. In terms of the diagonal entropy, thermodynamic irreversibility follows from the facts that the Hamiltonian dynamics restricts quantum trajectories under unitary evolution and that the external operation is performed without referring to any particular information about the microscopic state of the system. 
\end{abstract}
\pacs{05.30.-d, 03.65.-w}
\maketitle
\section{Introduction}.
Entropy is a concept in thermodynamics consistent with all empirical facts on a macroscopic scale, satisfying (i) additivity, (ii) extensivity, and (iii) the second law of thermodynamics. Here (iii) dictates that, upon an external operation on an isolated system,
entropy stays constant if the operation is quasi-static
and otherwise increases~\cite{Lieb1999}.
However, a quantum-mechanical definition of the thermodynamic entropy
which satisfies all the three requirements has not been established~\cite{balian2007microphysics}.
For example,
the von Neumann entropy (vN-entropy)
$S_{\text{vN}}(\hat{\rho})\equiv-\text{tr}(\hat{\rho}\ln\hat{\rho})$,
where $\hat{\rho}$ is the density operator of the system
(the Boltzmann constant is set to unity in this paper),
does not satisfy (iii)
because
the vN-entropy stays constant under any unitary evolution
including non-quasi-static external operations~\cite{landau1996statistical}.
Now that
isolated quantum systems have experimentally been realized, for example, 
in ultracold atomic systems~\cite{Greiner2002,Kinoshita2006,Trotzky2012},
it seems urgent to seek for a microscopic definition of the thermodynamic entropy.

The difficulty in deriving the second law from quantum mechanics
lies in the fact that the former asserts irreversibility while the latter presupposes unitary evolution, which is reversible.
For essentially the same reason, it is also nontrivial how thermalization occurs in an isolated quantum system.

The notion of ``typicality''
has recently advanced 
the understanding for the latter problem~\cite{Polkovnikov2011a,Yukalov2011}.
A great majority of pure states in a microcanonical energy shell,
or typical pure states,
have been shown to represent thermal equilibrium~\cite{Popescu2006,Goldstein2006,Sugita2006,Reimann2007}.
Furthermore,
it has also been shown that,
during
a long-time unitary evolution with a time-independent Hamiltonian $\hat{H}$,
a stationary state is seen at almost every time, or typical times,
under proper assumptions (see Assumptions~1 and 2 below)~\cite{Tasaki1998,Reimann2008a,Rigol2008,Linden2009,
Neumann2010,Goldstein2010a,Goldstein2010b}.
The reason why the stationary state is seen at typical times is
because recurrences occur repeatedly in unitary evolution, which is reversible~\cite{Bocchieri1957}.
However, since the recurrence time grows exponentially with the degrees of freedom of the system,
we rarely, if ever, encounter them.
We note that
all the information about the stationary state 
is encoded only in the diagonal elements of a state $\hat{\rho}$ in the eigenbasis of $\hat{H}$
because
the off-diagonal elements oscillate at different frequencies and are averaged out~\cite{Reimann2008a,Rigol2008}.

The diagonal entropy (d-entropy)~\cite{Polkovnikov2011} describes the thermodynamic entropy of the stationary state
and is defined by
\begin{align}\label{eq:diagonal_entropy}
S(\hat{\rho})\equiv-\sum_n\rho_{nn}\ln\rho_{nn},
\end{align}
where $\rho_{nn}$ are the diagonal elements of $\hat{\rho}$ in the energy eigenbasis at each instant of time.
The d-entropy consists only of the diagonal elements because they retain the information about the stationary state.
In particular, if $\hat{\rho}$ is a diagonal operator such as the canonical ensemble, the d-entropy coincides with the vN-entropy.
We note that the d-entropy has been shown to possess 
(i) additivity
and 
(ii) extensivity~\cite{Polkovnikov2011}.

The d-entropy stays constant during unitary evolution with a time-independent Hamiltonian
because the stationary state does not change.
In this paper we do not discuss
the entropy for non-stationary states,
which is believed to monotonically increase as the system approaches the stationary state;
however, it still remains unclear if such entropy can be defined locally in time~\cite{Lieb2013a}.
Nevertheless, the d-entropy does not contradict Boltzmann's H-theorem
because the sum of d-entropies of subsystems grows in time as the system approaches a stationary state~\cite{Polkovnikov2011}.

In this paper
we show that the d-entropy satisfies (iii)
if an arbitrary unitary operation is performed at typical times.
Considering that a stationary state is achieved at typical times as mentioned above,
it seems sufficient that (iii) holds for typical operation times
because (equilibrium) thermodynamics assigns entropy only to equilibrium states
and the second law refers to an external operation which changes one equilibrium state to another.
Furthermore, no function is known to satisfy (iii) 
for any $\hat{\rho}$ and unitary operation $\hat{V}$.
This is because
because, if $\hat{V}$, which changes $\hat{\rho}$ into $\hat{\rho}'=\hat{V}\hat{\rho}\hat{V}^\dagger$,
increases $\mathcal{S}$, then the inverse operation $\hat{V}^\dagger$
decreases $\mathcal{S}$ when applied to $\hat{\rho}'$.
Thus, we believe that the d-entropy is, at this moment, the best
microscopic definition of the thermodynamic entropy regarding (i), (ii), and (iii).

\section{Setup}.
Let the dimension of the Hilbert space $\mathpzc{H}$ be $D$,
which is either finite or countably infinite.
At time $t=0$ the state of the system is represented by the density matrix
$\hat{\rho}^0$, which is either pure or mixed.
From $t=0$ to $\tau$,
the evolution is governed by a time-independent Hamiltonian $\hat{H}$.
At $t=\tau$ we perform an external operation represented by an arbitrary unitary operator $\hat{V}$~\cite{CommentOperation}.
The state immediately after the operation is given by
$\hat{\rho}'(\tau)=\hat{V}\ee^{-\ii\hat{H}\tau/\hbar}\hat{\rho}^0\ee^{\ii\hat{H}\tau/\hbar}\hat{V}^\dagger$.
After the operation the evolution is governed by a time-independent Hamiltonian $\hat{H}'$.
The operation is either cyclic 
($\hat{H}'=\hat{H}$) or otherwise ($\hat{H}'\neq\hat{H}$).
Let $\{\ket{E_n}\}_{n=1}^D$ and
$\{\ket{E'_n}\}_{n=1}^D$ be the eigenstates of the Hamiltonians
$\hat{H}$ and $\hat{H}'$, respectively.
Namely, it follows that
$\hat{H}\ket{E_n}=E_n ket{E_n}$ and 
$\hat{H}'\ket{E'_n}=E'_n\ket{E'_n}$  $(\forall{n}\in\{1,2,\dots,D\})$.
We make the following assumption on the energy spectra $\{E_n\}_{n=1}^D$ and $\{E_n'\}_{n=1}^D$.

\begin{assumption}\label{spectrum}
(A) The energy spectra $\{E_n\}_{n=1}^D$ and $\{E_n'\}_{n=1}^D$ are not degenerate.
(B) 
If $E_{k}-E_{l}=E_{m}-E_{n}\neq0$,
then  $k=m$ and $l=n$.
The same condition also holds for $\{E_n'\}_{n=1}^D$.
\end{assumption}
Assumption~\ref{spectrum}(B),
which implies
that all spacings between two different eigenenergies are different,
is believed to hold in systems of interacting particles
and known as a necessary condition for a stationary state to be reached
in isolated quantum systems~\cite{Tasaki1998,Reimann2008a,Neumann2010,Short2011}.

To show the second law in this setup,
we denote the d-entropies before the operation as $S_0$
and after it as $S'(\tau)$.
Introducing 
$\rho^0_{mn}\equiv\braket{E_m|\hat{\rho}^0|E_n}$ and
$\rho'_{mn}(\tau)\equiv\braket{E_m'|\hat{\rho}'(\tau)|E_n'}$,
we have
$S_0=-\sum_n\rho^0_{nn}\ln\rho^0_{nn}$ and $S'(\tau)=-\sum_n\rho'_{nn}(\tau)\ln\rho'_{nn}(\tau)$
(the unspecified sum $\sum_n$ denotes $\sum_{n=1}^D$ throughout this paper).
We note $\rho'_{mn}(\tau)=\sum_{k,l}U_{mk}U_{nl}^*\ee^{-\ii(E_k-E_l)\tau/\hbar}\rho^0_{kl}$
with $U_{mn}\equiv\braket{E'_m|\hat{V}|E_n}$ 
denoting the transition amplitude between $\ket{E_n}$ and $\ket{E_m'}$ caused by the operation.

\section{Rigorous lower bounds for the d-entropy after the operation}.
Before discussing the general initial state including pure states,
we note that the second law holds
if the initial state of the system is diagonal in the energy eigenbasis:
\begin{align}\label{eq:diagonal_ensemble}
\hat{\rho}_\text{DE}^0\equiv\sum_n\rho^0_{nn}\ket{E_n}\bra{E_n}.
\end{align}
The initial d-entropy is $S_0$
since
the diagonal elements of $\hat{\rho}^0$ and $\hat{\rho}_\text{DE}$ are equal.
The state immediately after the operation is 
$\hat{\rho}'_\text{DE}\equiv\hat{V}\hat{\rho}^0_\text{DE}\hat{V}^\dagger$.
With
$\mu_n\equiv\braket{E_n'|\hat{\rho}'_\text{DE}|E_n'}=\sum_m|U_{nm}|^2\rho_{mm}^0$,
the d-entropy after the operation is given by
$S'_\text{DE}\equiv-\sum_n\mu_n\ln\mu_n$, which satisfies
\begin{align}\label{eq:Sde}
S_0\le S'_\text{DE}
\end{align}
from Jensen's inequality
and the doubly stochastic property of 
$|U_{mn}|^2$
(\textit{i.e.}, $\sum_n|U_{mn}|^2=\sum_m|U_{mn}|^2=1$)~\cite{Polkovnikov2011}.

Although the above argument
ensures the second law if the initial state is diagonal in the energy eigenbasis~\cite{Lenard1978,Tasaki2000},
it is not applicable to a sequence of operations since
the final state is not always diagonal.
To resolve this problem, we discuss an arbitrary initial state $\hat{\rho}^0$
whose off-diagonal elements are not necessarily zero due to superpositions of the energy eigenstates.
Our first result is that
the off-diagonal elements make a negligibly small contribution,
which is less than 1 in the long-time average of the d-entropy
(see Supplemental Material for the proof).
\begin{thm}\label{LTA}
\begin{align}\label{eq:average_bound}
0\le S'_\text{DE}-\overline{S'(\tau)}<1,
\end{align}
where $\overline{F(\tau)}\equiv\lim_{T\to \infty}T^{-1}\int_0^T\dd\tau\,F(\tau)$ for
any function $F(\tau)$.
The equality holds if $\hat{\rho}^0=\hat{\rho}_\text{DE}$ or $U_{mn}$ is diagonal.
\end{thm}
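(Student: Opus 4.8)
The plan is to convert the two d-entropies into a single relative entropy, bound that relative entropy by a $\chi^2$-type divergence, and then evaluate the long-time average of the latter explicitly with the help of Assumption~\ref{spectrum}.

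\emph{Step 1 (time average and the lower bound).} By Assumption~\ref{spectrum}(A), only the $k=l$ terms survive the long-time average in $\rho'_{nn}(\tau)=\sum_{k,l}U_{nk}U_{nl}^*\ee^{-\ii(E_k-E_l)\tau/\hbar}\rho^0_{kl}$, so $\overline{\rho'_{nn}(\tau)}=\sum_k|U_{nk}|^2\rho^0_{kk}=\mu_n$. Writing $\rho'_{nn}(\tau)=\mu_n+\delta_n(\tau)$, we have $\overline{\delta_n(\tau)}=0$ and $\sum_n\delta_n(\tau)=0$ for every $\tau$ (trace preservation). Positive semidefiniteness of $\hat\rho^0$ gives $\mu_n=0\Rightarrow\rho'_{nn}(\tau)=0$, so $D(\tau)\equiv\sum_n\rho'_{nn}(\tau)\ln[\rho'_{nn}(\tau)/\mu_n]$ is a well-defined relative entropy; taking its long-time average and using $\overline{\rho'_{nn}(\tau)}=\mu_n$ in the $\ln\mu_n$ piece yields $\overline{D(\tau)}=S'_\text{DE}-\overline{S'(\tau)}$. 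The lower bound $0\le S'_\text{DE}-\overline{S'(\tau)}$ is then immediate from $D(\tau)\ge0$ (equivalently, from the same concavity argument that underlies Eq.~\eqref{eq:Sde}).

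\emph{Step 2 (the upper bound).} I would first use $\ln(1+y)\le y$ together with $\rho'_{nn}(\tau)\ge0$ and $\sum_n\delta_n(\tau)=0$ to obtain the $\chi^2$-type estimate
\begin{align}
D(\tau)=\sum_n\rho'_{nn}(\tau)\ln\!\Big(1+\frac{\delta_n(\tau)}{\mu_n}\Big)\le\sum_n\frac{\delta_n(\tau)^2}{\mu_n}.
\end{align}
By Assumption~\ref{spectrum}(B) every nonzero gap $E_k-E_l$ is realized by a unique ordered pair $(k,l)$, so a Parseval-type computation of the time average gives $\overline{\delta_n(\tau)^2}=\sum_{k\neq l}|U_{nk}|^2|U_{nl}|^2|\rho^0_{kl}|^2$. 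Now invoke positive semidefiniteness of $\hat\rho^0$ in the form $|\rho^0_{kl}|^2\le\rho^0_{kk}\rho^0_{ll}$, so that $\overline{\delta_n(\tau)^2}\le\sum_{k\neq l}|U_{nk}|^2|U_{nl}|^2\rho^0_{kk}\rho^0_{ll}=\mu_n^2-\sum_k|U_{nk}|^4(\rho^0_{kk})^2<\mu_n^2$ whenever $\mu_n>0$. Hence $\overline{\delta_n(\tau)^2}/\mu_n<\mu_n$, and summing over $n$ (legitimate since all terms are nonnegative) gives $\overline{D(\tau)}\le\sum_n\overline{\delta_n(\tau)^2}/\mu_n<\sum_n\mu_n=1$, the claimed bound. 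Equality holds precisely when $\delta_n(\tau)\equiv0$, which occurs if $\hat\rho^0=\hat\rho^0_\text{DE}$ (then $\rho^0_{kl}=0$ for $k\neq l$) or if $U_{mn}$ is diagonal.

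I expect the upper bound to be the main obstacle: the $\chi^2$ estimate carries factors $1/\mu_n$ that a priori can be arbitrarily large, so controlling $\overline{\delta_n(\tau)^2}$ crudely—or invoking a dimension-dependent continuity bound for the Shannon entropy, which degenerates when $D$ is infinite—is not enough. The decisive input is $|\rho^0_{kl}|^2\le\rho^0_{kk}\rho^0_{ll}$, which exactly cancels the dangerous denominator and produces the clean constant $\sum_n\mu_n=1$; the same positivity makes the inequality strict. Minor points to dispatch are the interchange of $\sum_n$ with the long-time average when $D$ is countably infinite (harmless by nonnegativity and monotone convergence) and the standard justification of the Parseval step for a countable set of incommensurate frequencies.
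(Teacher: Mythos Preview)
Your proof is correct and follows essentially the same route as the paper's. Both arguments identify $S'_\text{DE}-\overline{S'(\tau)}$ with the time-averaged relative entropy $\overline{D(\tau)}$, bound $D(\tau)$ from above by the $\chi^2$-divergence $\sum_n\rho'_{nn}(\tau)^2/\mu_n-1=\sum_n\delta_n(\tau)^2/\mu_n$ via $\ln x\le x-1$, evaluate $\overline{\delta_n(\tau)^2}=\nu_{nn}$ using Assumption~\ref{spectrum}(B), and then invoke $|\rho^0_{kl}|^2\le\rho^0_{kk}\rho^0_{ll}$ to obtain $\nu_{nn}\le\mu_n^2-\sum_k|U_{nk}|^4(\rho^0_{kk})^2<\mu_n^2$, which yields the bound and its strictness.
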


We note that 
the d-entropy, unlike any other observable, is a non-linear function of the density matrix and therefore
$\overline{S'(\tau)}$ is not equal to the d-entropy of $\overline{\hat{\rho}'(\tau)}=\rho_\text{DE}'$.
Theorem~\ref{LTA} asserts that the difference between them caused by the nonlinearity is less than 1.

As discussed later, Theorem~\ref{LTA} together with \eqref{eq:Sde} leads to the second law for $\overline{S'(\tau)}$
when the entropies are extensive,
and therefore 1 on the right-hand side of \eqref{eq:average_bound} can be ignored.

Our second result is the following theorem, which gives the probability of $\tau$
with which $S'(\tau)$ is larger than a certain value
(see Supplemental Material for the proof).
\begin{thm}\label{Prob}
\begin{align}\label{eq:typical_2nd}
{\rm Prob}\left[S_0-S_0^{2/3}<S'(\tau)\right]\geq1-\frac{2S'_\text{DE}}{\,S_0^{4/3}}-\frac{R}{S_0^{4/3}}  .
\end{align}
where $R\equiv\sum_{m,n}\nu_{mn}\ln\mu_m\ln\mu_n$ and
$\nu_{mn}\equiv\sum_{\substack{k,l\\k\neq{l}}}U_{mk}U_{nl}^*U_{nk}^*U_{ml}|\rho^0_{kl}|^2$
and $\text{Prob}[\cdots]$ is defined for a uniform distribution of $\tau$ over $\tau\ge0$~\cite{Tasaki1998,Reimann2008a}.
\end{thm}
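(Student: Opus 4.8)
The plan is to bound the complementary probability $\mathrm{Prob}\!\left[S'(\tau)\le S_0-S_0^{2/3}\right]$ by Chebyshev's (second-moment) inequality, the uniform average over $\tau$ playing the role of the expectation. First I would reduce to a fluctuation estimate: since $S_0\le S'_\mathrm{DE}$ by \eqref{eq:Sde}, the event $\{S'(\tau)\le S_0-S_0^{2/3}\}$ is contained in $\{S'_\mathrm{DE}-S'(\tau)\ge S_0^{2/3}\}$, so it suffices to estimate $\mathrm{Prob}\!\left[S'_\mathrm{DE}-S'(\tau)\ge S_0^{2/3}\right]$. The exponent $2/3$ is then natural: we probe how far $S'(\tau)$ can drop below $S_0$ by an amount $S_0^{2/3}$, which is small compared with $S_0$, while its square $S_0^{4/3}$ is exactly the quantity appearing in the Chebyshev denominator.

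The second step is an exact rewriting of the entropy difference. Splitting $\rho'_{nn}(\tau)=\mu_n+\delta_n(\tau)$ with oscillating part $\delta_n(\tau)=\sum_{k\ne l}U_{nk}U_{nl}^{*}\,\ee^{-\ii(E_k-E_l)\tau/\hbar}\rho^0_{kl}$ (real, with vanishing time average by Assumption~\ref{spectrum}), a standard manipulation gives $S'_\mathrm{DE}-S'(\tau)=W(\tau)+D\!\left(p(\tau)\,\|\,\mu\right)$, where $W(\tau)\equiv\sum_n\delta_n(\tau)\ln\mu_n$ has zero time average and $D\!\left(p(\tau)\,\|\,\mu\right)=\sum_n\rho'_{nn}(\tau)\ln[\rho'_{nn}(\tau)/\mu_n]\ge0$ is the relative entropy of the instantaneous diagonal distribution with respect to the diagonal ensemble. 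The relative-entropy piece is already controlled: Theorem~\ref{LTA} is equivalent to the statement that $\overline{D\!\left(p(\tau)\,\|\,\mu\right)}<1$, so Markov's inequality bounds the probability that it reaches order $S_0^{2/3}$ by a quantity $\lesssim S_0^{-2/3}\le S'_\mathrm{DE}\,S_0^{-4/3}$ (using $S'_\mathrm{DE}\ge S_0$), which I would absorb into the $2S'_\mathrm{DE}/S_0^{4/3}$ term of \eqref{eq:typical_2nd}.

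It then remains to control $W(\tau)$, for which Chebyshev gives $\mathrm{Prob}[W(\tau)\ge c]\le\overline{W(\tau)^2}/c^2$ with $c$ of order $S_0^{2/3}$. The key computation is $\overline{W(\tau)^2}=\sum_{m,n}\ln\mu_m\ln\mu_n\,\overline{\delta_m(\tau)\delta_n(\tau)}$: expanding the product of the two oscillating factors and invoking Assumption~\ref{spectrum}(B), only the resonant terms with $E_k-E_l=E_{l'}-E_{k'}$ survive the time average, and nondegeneracy forces the pairing $(k',l')=(l,k)$. This collapse leaves a $\tau$-independent double sum over the transition amplitudes weighted by $|\rho^0_{kl}|^2$, which—summed against $\ln\mu_m\ln\mu_n$—is controlled by $R$ together with a remainder bounded by a multiple of $S'_\mathrm{DE}$ coming from the coincidence $m=n$ (using $\overline{\delta_n(\tau)^2}\le\mu_n^2$ and $\sum_n\mu_n^2(\ln\mu_n)^2\le\tfrac1e S'_\mathrm{DE}$). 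Combining the two contributions yields \eqref{eq:typical_2nd}.

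The main obstacle is the nonlinearity of the entropy: a population $\rho'_{nn}(\tau)$ can come arbitrarily close to $0$, so one cannot Taylor-expand $-\rho'_{nn}\ln\rho'_{nn}$ around $\mu_n$—the second-order remainder $\propto1/\mu_n$ is not uniformly bounded. The device that makes the argument rigorous is the exact identity $S'_\mathrm{DE}-S'(\tau)=W(\tau)+D\!\left(p(\tau)\,\|\,\mu\right)$ together with the elementary bound $\ln x\le x-1$, which trades the uncontrolled nonlinear remainder for the manifestly nonnegative relative entropy—already tamed by Theorem~\ref{LTA}—and isolates a single genuinely oscillatory quantity $W(\tau)$ that is linear in the $\delta_n(\tau)$ and hence amenable to the resonance bookkeeping. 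A minor technical point is the interchange of the $n$-sum with the time average when $D$ is countably infinite, which is justified by nonnegativity after truncating to finite dimension.
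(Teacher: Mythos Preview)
Your decomposition $S'_\text{DE}-S'(\tau)=W(\tau)+D_{\mathrm{KL}}(p(\tau)\,\|\,\mu)$ is correct and is in fact the same identity that underlies the paper's proof; however, the paper assembles the pieces differently and thereby lands on the exact constants in \eqref{eq:typical_2nd}. Instead of separately applying Markov to $D_{\mathrm{KL}}$ and Chebyshev to $W$ and then union-bounding (which produces e.g.\ $1/c_1+R/c_2^2$ with $c_1+c_2=S_0^{2/3}$, hence a factor $4R/S_0^{4/3}$ rather than $R/S_0^{4/3}$), the paper bounds the full second moment directly: the inequality $D_{\mathrm{KL}}\ge0$ is used only as $0\le S'(\tau)\le-\sum_n\rho'_{nn}(\tau)\ln\mu_n$, so
\[
\overline{S'(\tau)^2}\;\le\;\sum_{m,n}\overline{\rho'_{mm}(\tau)\rho'_{nn}(\tau)}\,\ln\mu_m\ln\mu_n\;=\;(S'_\text{DE})^2+R,
\]
and Theorem~\ref{LTA} (i.e.\ $\overline{S'(\tau)}\ge S'_\text{DE}-1$) then gives $\overline{[S'(\tau)-S'_\text{DE}]^2}\le 2S'_\text{DE}+R$. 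A single Chebyshev with threshold $S_0^{2/3}$, followed by $S_0\le S'_\text{DE}$, finishes the proof with precisely the constants stated. Your route proves a qualitatively equivalent bound (ample for Corollary~\ref{typ_2nd}) but not the inequality \eqref{eq:typical_2nd} as written.

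Two minor corrections to your sketch. First, $\overline{W(\tau)^2}=R$ \emph{exactly}: since $\delta_n(\tau)$ carries only $k\ne l$ terms, the resonance pairing under Assumption~\ref{spectrum}(B) gives $\overline{\delta_m(\tau)\delta_n(\tau)}=\nu_{mn}$ on the nose, and $R$ already includes the $m=n$ terms; there is no ``remainder from the coincidence $m=n$.'' Second, the bound you invoke, $\overline{\delta_n(\tau)^2}\le\mu_n^2$, is the statement $\nu_{nn}\le\mu_n^2$, which is exactly Lemma~\ref{supp:ineq:bound:munu}; but you do not actually need it here.
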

As discussed later,
Theorem~\ref{Prob} ensures the second law
with unit probability in the thermodynamic limit,
where $S_0^{2/3}$ is negligible compared with $S_0$.

\section{Extensivity of the d-entropy and two second laws in lattice systems}.
To derive the second law of thermodynamics,
we consider a lattice system with the number of sites $N$,
which is a large but finite positive integer.
We assume that the total Hilbert space
$\mathpzc{H}$
is the direct product of every Hilbert space at site $i$: $\mathpzc{H}=\otimes_{i=1}^N\mathpzc{H}_i$.
We also assume that each local Hilbert space $\mathpzc{H}_i$ has the same dimension $d(\ge2)$.
We note $D=\dim\mathpzc{H}=d^N$.

The extensivity of the d-entropy is guaranteed by the following assumption on $\hat{\rho}^0$.
\begin{assumption}\label{assumption:Deff}
An effective dimension, $D_\text{eff}$, is exponentially large in the number of sites $N$:
\begin{align}\label{eq:small_Deff}
D_\text{eff}\equiv\left[\sum_n(\rho^0_{nn})^2\right]^{-1}=\ee^{+O(N)}.
\end{align}
\end{assumption}
\noindent Since we can derive $\ln D_\text{eff}\le{S_0}\le{S'_\text{DE}}\le\ln{D}=N\ln{d}$ from the concavity of the d-entropy,
Assumption~\ref{assumption:Deff} guarantees that $S_0$ and $S'_\text{DE}$ are extensive, or $O(N)$. 

Assumption~\ref{assumption:Deff} ensures that
the initial state, which is non-stationary in general,
reaches a stationary state during unitary time evolution~\cite{Reimann2008a,Linden2009,Short2011}.
This is consistent with the fundamental requirement of thermodynamics that
a system reaches thermal equilibrium when left alone for a long period of time.
We note that Assumption~\ref{assumption:Deff} holds
unless an extremely small number of energy eigenstates are involved in $\hat{\rho}^0$.
This is because
$D_\text{eff}$ represents
an effective number of the populated energy eigenstates
and the number of energy eigenstates in a given energy shell is of the order of $\ee^{O(N)}$.

To describe the two second laws, we define $\lesssim$ and $\sim$ respectively as $\le$ and $=$
with sub-leading terms in $N$ ignored.
First, Theorem~\ref{LTA} and \eqref{eq:Sde} lead to $S_0-1\le\overline{S'(\tau)}$.
Thus, we obtain Corollary~\ref{ave_2nd} below.
Second, Theorem~\ref{Prob},
together with $R=\ee^{-O(N)}$ (see Supplemental Information for details),
leads to Corollary~\ref{typ_2nd} below.
\begin{corollary}[The second law for the time-averaged d-entropy]\label{ave_2nd}
\begin{align}\label{eq:ave_2nd}
S_0  \lesssim \overline{S'(\tau)}.
\end{align}
\end{corollary}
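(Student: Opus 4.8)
The plan is to chain the two inequalities already established and then invoke extensivity to discard the additive constant. First I would note that Theorem~\ref{LTA} gives $\overline{S'(\tau)} > S'_\text{DE} - 1$, while the diagonal-ensemble bound \eqref{eq:Sde} gives $S'_\text{DE} \ge S_0$. Combining these yields the clean strict inequality $\overline{S'(\tau)} > S_0 - 1$, valid for \emph{any} unitary operation $\hat{V}$ and \emph{any} initial state, with no input beyond Assumption~\ref{spectrum}.

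The second step is to argue that the $-1$ on the right-hand side is sub-leading in the system size $N$. Here I would use Assumption~\ref{assumption:Deff}: from the concavity of $x\mapsto -x\ln x$ one has $\ln D_\text{eff} \le S_0$, and since $D_\text{eff} = \ee^{+O(N)}$ this forces $S_0 = O(N)$, i.e.\ $S_0$ is extensive. Hence $S_0 - 1$ differs from $S_0$ only by a term that vanishes relative to the leading $O(N)$ behavior, so within the bookkeeping convention in which $\lesssim$ denotes $\le$ up to sub-leading terms in $N$, the bound $\overline{S'(\tau)} > S_0 - 1$ becomes exactly the claimed $S_0 \lesssim \overline{S'(\tau)}$.

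I do not anticipate a genuine obstacle: this is a corollary in the literal sense, a two-line deduction from Theorem~\ref{LTA}, \eqref{eq:Sde}, and Assumption~\ref{assumption:Deff}. The only point needing care is to make the meaning of $\lesssim$ precise --- one should state explicitly that it is the strict extensivity $S_0 = \Theta(N)$, guaranteed through $\ln D_\text{eff} \le S_0$, and not merely the trivial bound $S_0 \le \ln D = N\ln d$, that licenses dropping the constant $1$. No estimate on the off-diagonal matrix elements themselves is required at this stage, since Theorem~\ref{LTA} has already packaged their total contribution into the universal bound $1$; the role of Assumption~\ref{assumption:Deff} is solely to certify that this $O(1)$ slack is negligible against the $O(N)$ size of the entropies.
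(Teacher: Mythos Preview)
Your proposal is correct and follows exactly the paper's own argument: combine Theorem~\ref{LTA} with \eqref{eq:Sde} to obtain $S_0 - 1 \le \overline{S'(\tau)}$, then invoke Assumption~\ref{assumption:Deff} (via $\ln D_\text{eff} \le S_0$) to conclude that the additive $1$ is sub-leading in $N$. Your explicit remark that extensivity comes from the lower bound $\ln D_\text{eff} \le S_0$ rather than from $S_0 \le \ln D$ is a useful clarification the paper leaves implicit.
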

\begin{corollary}[The second law for typical operation times]\label{typ_2nd}
\begin{align}\label{eq:typ_2nd}
\text{Prob}\left[S_0\lesssim{S'(\tau)}\right]\sim1.
\end{align}
\end{corollary}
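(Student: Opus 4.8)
The plan is to read off Corollary~\ref{typ_2nd} directly from Theorem~\ref{Prob} by estimating the order in $N$ of every term appearing in \eqref{eq:typical_2nd}. The two inputs I would invoke are: (a) the concavity chain $\ln D_\text{eff}\le S_0\le S'_\text{DE}\le N\ln d$ derived above, which together with Assumption~\ref{assumption:Deff} forces both $S_0$ and $S'_\text{DE}$ to be extensive (of order $N$); and (b) the bound $R=\ee^{-O(N)}$ quoted from the Supplemental Material. Nothing beyond elementary asymptotics is then required.

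First I would bound the two subtracted terms on the right-hand side of \eqref{eq:typical_2nd}. Because $S_0\ge\ln D_\text{eff}$ grows at least linearly in $N$ by Assumption~\ref{assumption:Deff}, the denominator $S_0^{4/3}$ grows at least as fast as $N^{4/3}$, while the numerator $S'_\text{DE}\le N\ln d$ is $O(N)$; hence $2S'_\text{DE}/S_0^{4/3}=O(N^{-1/3})\to0$ as $N\to\infty$. Since $R=\ee^{-O(N)}$ is exponentially small in $N$ whereas $S_0^{4/3}$ grows only polynomially, $R/S_0^{4/3}\to0$ as well. Therefore the right-hand side of \eqref{eq:typical_2nd} equals $1-O(N^{-1/3})$, which is $\sim1$ once subleading terms in $N$ are discarded; being a probability, the left-hand side $\text{Prob}[S_0-S_0^{2/3}<S'(\tau)]$ is then sandwiched between this quantity and $1$.

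Next I would rewrite the event itself. Since $S_0$ is of order $N$, the threshold shift $S_0^{2/3}=O(N^{2/3})$ is $o(S_0)$ and hence subleading, so under the convention that subleading terms in $N$ are ignored the event $\{S_0-S_0^{2/3}<S'(\tau)\}$ coincides with the event $\{S_0\lesssim S'(\tau)\}$. Combining this identification with the probability estimate of the previous paragraph yields $\text{Prob}[S_0\lesssim S'(\tau)]\sim1$, which is Corollary~\ref{typ_2nd}.

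I do not expect a genuine obstacle inside the corollary: it amounts to bookkeeping of powers of $N$ once Theorem~\ref{Prob}, the concavity chain, Assumption~\ref{assumption:Deff}, and the bound on $R$ are in hand. The substantive work lies upstream in the Supplemental Material — in the Chebyshev-type control of the time-averaged second moment of $S'(\tau)$ that produces Theorem~\ref{Prob}, and in showing that the cross-frequency sum defining $R$ collapses to something exponentially small, which is where Assumption~\ref{spectrum}(B) (non-resonance of the energy gaps) must be invoked to discard all but the fully diagonal contributions.
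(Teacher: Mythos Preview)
Your proposal is correct and follows essentially the same route as the paper: invoke Theorem~\ref{Prob}, use the concavity chain with Assumption~\ref{assumption:Deff} to make $S_0$ and $S'_\text{DE}$ extensive, use the supplemental bound $R=\ee^{-O(N)}$ to kill the remaining term, and absorb the subleading $S_0^{2/3}$ into the $\lesssim$ relation. One minor remark on your closing commentary: the non-resonance Assumption~\ref{spectrum}(B) enters upstream in Lemma~\ref{supp:lem:2ndmom} when computing the long-time average that defines $\nu_{mn}$ and hence $R$, whereas the exponential smallness of $R$ itself (Theorem~\ref{supp:thm:bound_R}) is driven by Assumption~\ref{assumption:Deff} on $D_\text{eff}$.
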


We note that the inequality $S_0\le\overline{S'(\tau)}$, which is stronger than \eqref{eq:ave_2nd}, does not hold in general.
If the initial state is, for example, at the infinite temperature, or $\rho^0_{nn}=1/D$ $(n=1,2,\dots,D)$
and not every $\rho^0_{mn}$ is zero for $m\neq{n}$,
the d-entropy decreases upon an external operation.
However, our results show that the decrease is sub-extensive and thus negligible for almost all $\tau$.
Namely, once the state reaches the infinite temperature, the d-entropy saturates
within sub-extensive fluctuations.

We also note that Corollary~\ref{typ_2nd} does not contradict the inverse process
mentioned at the end of the introduction.
Here we fix $\tau=\tau_1$
and suppose $S_0 < S'(\tau_1)$.
Let $\hat{\rho}'(\tau_1)$ evolves with $\hat{H}'$ for a period of time $\tau_2$
and $\hat{\rho}^f_{\tau_2}=\ee^{-\ii\hat{H}'\tau_2/\hbar}\hat{\rho}'(\tau)\ee^{\ii\hat{H}'\tau_2/\hbar}$
be the final state of the forward process. 
We define the inverse process such that
the initial state is $\hat{\rho}^f_{\tau_2}$ at $t=0$,
evolves according to $-\hat{H}'$ from $t=0$ to $\tilde{\tau}$,
and a unitary external operation $\hat{V}^\dagger$ is performed at $t=\tilde{\tau}$.
The operation $\hat{V}^\dagger$ decreases the d-entropy for $\tilde{\tau}=\tau_2$,
whereas Corollary~\ref{typ_2nd} states that $\hat{V}^\dagger$ increases the d-entropy for almost all $\tilde{\tau}$.
Since the latter statement does not hold for all $\tilde{\tau}$, these two statements are consistent.

This argument
accentuates the need for
choosing properly not only what operation to perform but also when to do it to decrease the d-entropy.
In particular,
unless we know the timing to do it based on
information about the microscopic state of the system,
the increase of the d-entropy is unavoidable.

\section{Geometrical interpretation of the asymmetry between entropy increase and decrease}.
To understand the physical origin of the asymmetry between entropy increase and decrease,
let us consider a qubit, where $D=2$ and two energy eigenstates are $\ket{0}$ and $\ket{1}$.
Each quantum state, which is described by a $2\times2$ density matrix $\hat{\rho}$,
is represented by a point on or inside a unit sphere called the Bloch sphere~\cite{nielsen2000quantum}
(see Fig.~\ref{bloch}a).
Here the $x$-, $y$-, and $z$-coordinates
are given by $\text{tr}[\hat{\rho}\sigma_a]$ $(a=x,y,z)$,
where $\sigma_a$'s are the Pauli matrices.
The d-entropy of each state depends only on its $z$-component
and monotonically increases from the poles to the equatorial plane (see Fig.~\ref{bloch}b).

\begin{figure}
\begin{center}
\includegraphics[width=8.5cm]{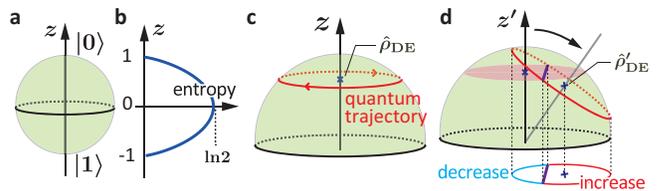}
\caption{(Color Online)
(a) The density matrix is represented by a point on or inside the Bloch sphere.
(b) The d-entropy~\eqref{eq:diagonal_entropy} depends only on the $z$-component and
attains the maximum value of $\ln2$ on the equatorial plane.
(c) A pure state traces a circular quantum trajectory perpendicular to the $z$-axis
under unitary evolution for a fixed Hamiltonian.
The corresponding diagonal ensemble $\hat{\rho}_\text{DE}$ is represented by the center of the trajectory.
(d) A unitary external operation and the associated change
in the energy eigenbasis, which tilts the axis of the trajectory.
The entropy of $\hat{\rho}_\text{DE}$
increases upon the operation
and the entropy increases for more than half of the pure states on the tilted trajectory.
}
\label{bloch}
\end{center}
\end{figure}

During the unitary evolution with the time-independent Hamiltonian,
a pure state traces a circle around the $z$-axis
which we call a quantum trajectory.
The corresponding diagonal ensemble $\hat{\rho}_\text{DE}$
lies at the center of the trajectory (see Fig.~\ref{bloch}c).

A unitary external operation, together with a change in the energy eigenbasis,
causes a unitary transformation on the quantum trajectory,
which is represented by a rotation on the Bloch sphere.
Thus, the quantum trajectory tilts
with $\hat{\rho}_\text{DE}$ changing into $\hat{\rho}_\text{DE}'$ (see Fig.~\ref{bloch}d).
Since $\hat{\rho}_\text{DE}'$ is closer to the equatorial plane than $\hat{\rho}_\text{DE}$, 
the d-entropy is more likely to increase
than decrease for the states on the trajectory (see Fig.~\ref{bloch}d).
Thus, $S_0<S'(\tau)$ holds with more than 50\% probability~\cite{InfiniteTemp}.
We note that Corollary~\ref{typ_2nd} asserts that, as $D$ increases,
this asymmetry is enhanced and the probability that the d-entropy decreases becomes vanishingly small.

The asymmetry in the entropy change
originates from the fact that the quantum trajectory is 
restricted to a circle around the $z$-axis,
which breaks the invariance of the Hilbert space under an arbitrary unitary transformation.
No matter how complex the system might be,
a well-defined Hamiltonian specifies a preferred basis,
thereby restricting possible quantum trajectories.

\section{Universal sub-extensive correction in the d-entropy change}.
Sub-extensive corrections in the d-entropy can be observed
in the systems having $D$ not more than $10^5$.
We conjecture that
$S_\text{DE}'-\overline{S'(\tau)}$ is actually bounded by the following inequality,
which is tighter than \eqref{eq:average_bound}.
\begin{conjecture}\label{conjecture:1-gamma}
\begin{align}
0\le{S'_\text{DE}}-\overline{S'(\tau)}\le1-\gamma,\label{eq:true_bound}
\end{align}
where $\gamma=0.5772\dots$ is Euler's constant.
The equality on the right-hand side of \eqref{eq:true_bound}
holds
no matter what the initial state and the operation are,
as long as
the initial state is pure and the operation causes
numerous transitions between energy eigenstates.
\end{conjecture}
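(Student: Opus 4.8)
\emph{Proof strategy.} The lower bound in \eqref{eq:true_bound} is the $0\le$ part of \eqref{eq:average_bound} and follows from Jensen's inequality for the convex function $y\ln y$, so only the upper bound by $1-\gamma$ needs a new argument. The plan has three stages: reduce the mixed-state case to pure initial states, replace the long-time average by a phase average via equidistribution, and show that the exponential distribution is the extremal case, whose $y\ln y$-moment equals $1-\gamma$.

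\emph{Reduction to pure states.} Spectrally decompose $\hat\rho^0=\sum_\alpha\lambda_\alpha\ket{\phi_\alpha}\bra{\phi_\alpha}$ with $\ket{\phi_\alpha}=\sum_k c^{(\alpha)}_k\ket{E_k}$, and set $a^{(\alpha)}_n(\tau)\equiv\sum_k U_{nk}c^{(\alpha)}_k\ee^{-\ii E_k\tau/\hbar}$, so that $\rho'_{nn}(\tau)=\sum_\alpha\lambda_\alpha|a^{(\alpha)}_n(\tau)|^2$ and $\mu_n=\sum_\alpha\lambda_\alpha\mu^{(\alpha)}_n$ with $\mu^{(\alpha)}_n\equiv\overline{|a^{(\alpha)}_n(\tau)|^2}=\sum_k|U_{nk}|^2|c^{(\alpha)}_k|^2$. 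Writing $Y_n(\tau)\equiv\rho'_{nn}(\tau)/\mu_n$ and $Y^{(\alpha)}_n(\tau)\equiv|a^{(\alpha)}_n(\tau)|^2/\mu^{(\alpha)}_n$, both with unit long-time average, the elementary identity $\overline{\rho'_{nn}\ln\rho'_{nn}}-\mu_n\ln\mu_n=\mu_n\overline{Y_n\ln Y_n}$ gives $S'_\text{DE}-\overline{S'(\tau)}=\sum_n\mu_n\overline{Y_n\ln Y_n}$, a convex combination since $\sum_n\mu_n=1$. Because $Y_n=\sum_\alpha q^{(n)}_\alpha Y^{(\alpha)}_n$ with $q^{(n)}_\alpha\equiv\lambda_\alpha\mu^{(\alpha)}_n/\mu_n\ge0$ and $\sum_\alpha q^{(n)}_\alpha=1$, convexity of $y\ln y$ applied pointwise in $\tau$ and then averaged yields $\overline{Y_n\ln Y_n}\le\sum_\alpha q^{(n)}_\alpha\overline{Y^{(\alpha)}_n\ln Y^{(\alpha)}_n}$. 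Hence it suffices to prove $\overline{Y\ln Y}\le1-\gamma$ for pure $\hat\rho^0$, with the inequality strict whenever two distinct $\lambda_\alpha$ are positive, matching part of the equality statement.

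\emph{From time average to phase average, and the extremal problem.} For pure $\hat\rho^0=\ket{\psi}\bra{\psi}$ with $\psi=\sum_k c_k\ket{E_k}$ one has $\rho'_{nn}(\tau)=|\sum_k w^{(n)}_k\ee^{-\ii E_k\tau/\hbar}|^2$ with $w^{(n)}_k\equiv U_{nk}c_k$ and $\sum_k|w^{(n)}_k|^2=\mu_n$. By Weyl's equidistribution theorem applied to the frequencies $\{E_k/\hbar\}$ that occur with $w^{(n)}_k\neq0$ --- rationally independent generically under Assumption~\ref{spectrum}, and otherwise to be replaced by the appropriate subtorus --- the long-time average becomes the phase average $\overline{Y_n\ln Y_n}=\mathbb{E}[Y\ln Y]$ with $Y\equiv|\sum_k w^{(n)}_k\ee^{\ii\Theta_k}|^2/\mu_n$ and $\Theta_k$ independent and uniform on $[0,2\pi)$. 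The claim then reduces to: among all weight vectors $\{|w^{(n)}_k|^2\}$, the functional $\mathbb{E}[Y\ln Y]$ is maximized in the limit of many negligible weights, where the Lindeberg central limit theorem forces $Z\equiv\sum_k w^{(n)}_k\ee^{\ii\Theta_k}$ to a circularly symmetric complex Gaussian and hence $Y$ to a unit-mean exponential variable; a one-line computation then gives $\mathbb{E}[Y\ln Y]=\int_0^\infty y\ln y\,\ee^{-y}\,\dd y=\Gamma'(2)=\psi(2)=1-\gamma$. The equality conditions match the conjecture because this limit is reached exactly when $\max_k|w^{(n)}_k|^2/\mu_n\to0$ for every $n$, which is what an operation causing numerous transitions (acting on a broadly spread pure state) guarantees.

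\emph{The main obstacle.} The hard part is the extremality in the last step. The natural mechanism is that appending an independent zero-mean phase term enlarges $Z$ in the convex order and so drives $|Z|^2$ toward the exponential law; but $y\ln y$, though convex, is decreasing on $(0,1/\ee)$, so $z\mapsto|z|^2\ln|z|^2$ fails to be convex near the origin and the textbook ``convex order is preserved under convolution'' estimate does not close. I would try to repair this by (i) splitting the expectation at $Y\sim1/\ee$: on $\{Y\lesssim1/\ee\}$ the integrand lies in $[-1/\ee,0]$ and is controlled by a small-ball estimate for $|Z|$, while on the complement $z\mapsto|z|^2\ln|z|^2$ is convex and the convex-order comparison applies; or (ii) passing to the Laplace transform $M(s)\equiv\mathbb{E}[\ee^{-sY}]$, which after a Hubbard--Stratonovich step factorizes into Bessel-type integrals, and using $\mathbb{E}[Y\ln Y]=\int_0^\infty s^{-1}[\ee^{-s}+M'(s)]\,\dd s$ to compare term by term with the exponential value $M(s)=(1+s)^{-1}$; or (iii) a direct variational argument that splitting any single weight into two smaller ones never decreases $\mathbb{E}[Y\ln Y]$, singling out the infinitely-divided equal-weight configuration as the unique maximizer. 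Any of these routes would also account for the sub-extensive corrections observed numerically at small $D$.
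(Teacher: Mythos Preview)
The statement you are trying to prove is presented in the paper as a \emph{conjecture}, not a theorem; the paper offers no proof. What the authors do provide is a heuristic: from the rigorous cases $\overline{\rho'_{nn}(\tau)}=\mu_n$ and $\overline{\rho'_{nn}(\tau)^2}\le 2\mu_n^2$ they \emph{guess} the interpolated moment bound $\overline{\rho'_{nn}(\tau)^{1+\epsilon}}\le\Gamma(2+\epsilon)\mu_n^{1+\epsilon}$ for $0\le\epsilon\le1$, then differentiate at $\epsilon=0$ (using $\Gamma'(2)=1-\gamma$) to read off $\overline{\rho'_{nn}\ln\rho'_{nn}}\le(1-\gamma)\mu_n+\mu_n\ln\mu_n$, which upon summation gives the upper bound. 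The equality case is supported only by numerics and by the observation that in the ``many transitions'' limit each $\rho'_{nn}(\tau)$ is exponentially distributed with mean $\mu_n$.

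Your route is structurally different and more ambitious: a convexity reduction to pure states, an equidistribution replacement of time averages by phase averages, and then an extremal problem over weight vectors whose maximizer is the exponential law. The reduction step is clean and correct; the identification of the exponential distribution as the saturating case matches the paper's claim and your CLT argument makes this precise. What your approach would buy, if completed, is an actual proof rather than a heuristic, and it explains \emph{why} the gamma function (and hence $1-\gamma$) appears: it is the $y\ln y$-moment of the unit exponential, not an interpolation artifact.

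Two genuine gaps remain, both of which you flag. First, Assumption~\ref{spectrum} (nondegenerate spectrum, nonresonant spacings) is strictly weaker than rational independence of the $E_k$, so the equidistribution step does not follow from the paper's hypotheses; you would need either to strengthen the spectral assumption or to argue that the subtorus average is still bounded by the full-torus average. Second, and more seriously, the extremality of the exponential distribution for $\mathbb{E}[Y\ln Y]$ over all weight configurations is the crux, and none of your three proposed repairs (small-ball splitting, Laplace-transform comparison, weight-splitting monotonicity) is carried through. This is the same obstruction the paper avoids by conjecturing the moment bound rather than proving it; a clean resolution here would upgrade the paper's conjecture to a theorem.
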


The conjecture is obtained as follows.
We can easily show
$\overline{\rho_{nn}'(\tau)}=\mu_n$ and $\overline{\rho_{nn}'(\tau)^2}\le2\mu_n^2$,
from which we conjecture
$\overline{\rho_{nn}'(\tau)^{1+\epsilon}}\le\Gamma(2+\epsilon)\mu_n^{1+\epsilon}$ $(0\le\epsilon\le1)$
where $\Gamma(x)$ is the gamma function.
This leads to the right inequality in \eqref{eq:true_bound}
in the limit $\epsilon\to0$ since $\Gamma'(2)=1-\gamma$.

We give numerical evidence for the conjecture.
We consider 5 hard-core bosons on a two-dimensional lattice of twenty sites,
which are arranged in a $4\times 5$ rectangle with free boundaries.
The coordinates at site $i$ are denoted by $(x_i,y_i)$ where $x_i=0,1,2,3$ and $y_i=0,1,2$.
The total Hamiltonian is the sum of the kinetic energy
$\hat{H}_\text{kin}=-J\sum_{\langle{i,j}\rangle}(\hat{b}_i^\dagger\hat{b}_j+\hat{b}_j^\dagger\hat{b}_i)$
and the potential energy $\hat{H}_\text{pot}=g\sum_iy_i\hat{n}_i$, 
where
$\sum_{\langle{i,j}\rangle}$ is taken over all the nearest neighbors,
$\hat{b}_i^\dagger$ ($\hat{b}_i$) is the creation (annihilation) operator of the boson on site $i$,
$\hat{n}_i\equiv\hat{b}_i^\dagger\hat{b}_i$ is the number operator of particles on site $i$,
and $g$ is the magnitude of the linear potential along the $y$-axis.
With the hard-core condition, which prohibits more than one boson occupying the same site,
the dimension of the Hilbert space is $D=\binom{20}{5}=15,504$.
The state is initially prepared in the 1000th eigenstate with $g=0$.
At $t=0$, a linear potential $\hat{H}_{\rm{pot}}$ with $g>0$ is switched on.
Then the state becomes nonstationary and approaches a stationary state as time advances.

In Fig.~\ref{fig:sub-extensive}, $S_\text{DE}'-\overline{S'(\tau)}$ is plotted against $g/J$,
where the average $\overline{S'(\tau)}$ is taken over the interval $[500,1000]\hbar/J$
and the error bar is the standard deviation of $S'(\tau)$ in this interval.
\begin{figure}
\begin{center}
\includegraphics[width=8.5cm]{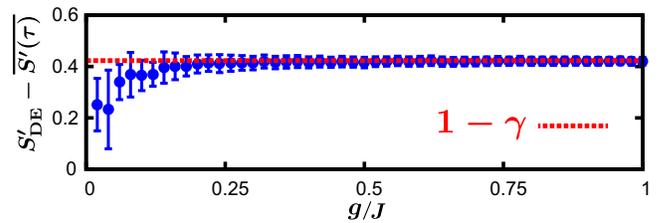}
\caption{(Color Online)
$S_\text{DE}'-\overline{S'(\tau)}$ plotted against the magnitude of the quantum quench, $g/J$,
with error bars showing the standard deviations.
All the data are consistent with \eqref{eq:true_bound}
and $S_\text{DE}'-\overline{S'(\tau)}$ is independent of $g/J$
for $g/J\gtrsim 0.25$.
}
\label{fig:sub-extensive}
\end{center}
\end{figure}

Figure~\ref{fig:sub-extensive} has two implications.
First, it is consistent with \eqref{eq:true_bound}
since all the data points are between 0 and $1-\gamma$.
Second, for $g/J\gtrsim0.25$,
$S_\text{DE}'-\overline{S'(\tau)}$ takes on the value of $1-\gamma$
independently of $g/J$.
We note that this universal constant was also found numerically
in the analysis of the d-entropy in periodically driven systems~\cite{DAlessio2013}.

For a pure state and a unitary external operation which causes numerous transitions between energy eigenstates,
we can analytically derive the universal constant $1-\gamma$
by the replica trick~\cite{Edwards1975}
or from the fact that
each $\rho'_{nn}(\tau)$ obeys an exponential distribution
whose mean is $\mu_n$~\cite{fullpaper}.

The constant $1-\gamma$ can be a signature for the state to be pure
if we can measure the eigenenergies of the system or $D$ diagonal elements of the density matrix.
Since we usually have to know $D^2$ elements of the density matrix including the off-diagonal elements
to measure the purity of the state~\cite{Bagan2005},
the above criterion of judging the purity of the state
is expected to have a practical advantage.

\section{Conclusion}.
In this paper
we have shown that
the d-entropy
satisfies the second law for typical operation times (Corollary~\ref{typ_2nd})
under two assumptions, which ensures that a stationary state is reached
in an isolated quantum system.
This result implies that the d-entropy is, at this moment,
the best microscopic definition of the thermodynamic entropy
regarding additivity, extensivity, and the second law.
In terms of the diagonal entropy,
irreversibility occurs because
the Hamiltonian dynamics restricts the physically allowed quantum states
and the operation is performed without referring to any information about the microscopic state of the system.
We have also shown that the entropy change caused by a unitary external operation includes
the universal sub-extensive correction $1-\gamma$.

\section{Acknowledgements}.
Fruitful discussions with Masahiro Hotta, Takashi Mori, Takahiro Sagawa, and Takanori Sugiyama are gratefully acknowledged.
This work was supported by
KAKENHI 22340114, 
a Grant-in-Aid for Scientific Research on Innovation Areas ``Topological Quantum Phenomena'' (KAKENHI 22103005),
and the Photon Frontier Network Program,
from MEXT of Japan.
T.N.I. acknowledges the JSPS for
financial support (Grant No. 248408).
N.S. was supported by a Grant-in-Aid for JSPS Fellows
(Grant No. 250588).

\section{Note added}.
We became aware of independent work by Tasaki and coworkes~\cite{Tasaki2000a,Goldstein2013}.
While they also discuss the second law in isolated quantum systems with the operation times uniformly distributed,
they only address cyclic operations ($\hat{H}=\hat{H}'$)
in which case only the total energy is needed for discussion.

\bibliographystyle{apsrev4-1}
\bibliography{Preprint_2ndlaw}

\appendix
\vspace{1cm}
\begin{center}
\textbf{{\large Appendix}}
\end{center}

This material supplements technical details omitted in the manuscript.
In Section~I, we give two lemmas to prove
the theorems in the manuscript. 
In Sections~\ref{supp:proof:thm1} and \ref{supp:proof:thm2},
we prove Theorem 1 and 2, respectively.
In Section~\ref{supp:smallness_R},
we show that $R$ is exponentially small in the number of sites under Assumption 2.
We set $\hbar=k_B=1$ in this Supplemental Material.

In the following discussion, we assume $\mu_n >0$ without loss of generality
because $n$ such that $\mu_n=0$ does not contribute to the d-entropy as shown below,
where
\begin{align}\label{supp:eq:def:mu}
\mu_n \equiv \sum_{k} |U_{nk}|^2 \rho^0_{kk}.
\end{align}
To justify the assumption $\mu_n >0$, we note
\begin{align}\label{supp:eq:rho0ineq}
|\rho_{kl}^0|^2 \le \rho_{kk}^0 \rho_{ll}^0,
\end{align}
which follows from the fact that the density matrix $\hat{\rho}^0$ is positive-semidefinite
[see, e.g., R.~A.~Horn and C.~R.~Johnson, {\it Matrix Analysis} (Cambridge University Press, 1990), pp.~398].
Now we suppose $\mu_n = 0$.
Then, Eq.~\eqref{supp:eq:def:mu} implies that $U_{nk} =0$ or $\rho^0_{kk}=0$ holds for all $k$.
If $\rho^0_{kk}=0$, \eqref{supp:eq:rho0ineq} tells us that $\rho^0_{kl}=0$ for all $l$.
Since 
\begin{align}\label{supp:eq:def:rho}
\rho'_{nn}(\tau) = \sum_{k,l} U_{nk}U_{nl}^* \ee^{-\ii (E_k - E_l)\tau} \rho^0_{kl},
\end{align}
we obtain $\rho'_{nn}(\tau)=0$ for all $\tau$ and $n$ such that $\mu_n=0$.
Thus, $\rho'_{nn}(\tau)$ such that $\mu_n=0$ does not contribute to the d-entropy after the operation
which is given by
\begin{align}\label{supp:eq:def:entropy}
S'(\tau) \equiv -\sum_n \rho'_{nn}(\tau) \ln \rho'_{nn}(\tau).
\end{align}

\section{Lemmas to prove Theorem~1, 2 and 3}\label{supp:proof:thm1}

To prove Theorems 1 and 2 in the manuscript
and Theorem 3 given in Section~\ref{supp:smallness_R} below,
we begin by showing two useful lemmas.
Lemma~\ref{supp:lem:2ndmom}
gives the long-time averages of $\rho'_{nn}(\tau)$ and
that of its correlation over $\tau$ and
Lemma~\ref{supp:ineq:bound:munu} gives its property.
\begin{lemma}\label{supp:lem:2ndmom}
\begin{align}\label{supp:eq:1stmom}
\overline{\rho'_{nn}(\tau)} = \mu_n;
\end{align}
\begin{align}
\overline{\rho'_{mm}(\tau) \rho'_{nn}(\tau)} =\mu_m \mu_n+ \nu_{mn},\label{supp:eq:2ndmom}
\end{align}
where
\begin{align}
\nu_{mn} \equiv \sum_{\substack{k,l \\ k \neq l}}
 U_{mk}  U^*_{ml}   U_{nk}^* U_{nl} | \rho^0_{kl}|^2. \label{eq:def:nu_nm}
\end{align}
\end{lemma}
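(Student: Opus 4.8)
The plan is to work directly from the explicit representation \eqref{supp:eq:def:rho} of $\rho'_{nn}(\tau)$ and to compute the long-time averages using the elementary identity that $\overline{\ee^{-\ii\omega\tau}}$ equals $1$ when $\omega=0$ and $0$ otherwise. First I would dispose of convergence: combining $|\rho^0_{kl}|\le\sqrt{\rho^0_{kk}\rho^0_{ll}}$ from \eqref{supp:eq:rho0ineq} with the Cauchy--Schwarz inequality and $\sum_k|U_{nk}|^2=\sum_k\rho^0_{kk}=1$ gives $\sum_{k,l}|U_{nk}U_{nl}^*\rho^0_{kl}|\le\big(\sum_k|U_{nk}|\sqrt{\rho^0_{kk}}\big)^2\le1$, and likewise the quadruple sum entering the second moment is bounded by $1$. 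Since these bounds are uniform in $\tau$, dominated convergence lets me interchange the limit with the sums, and the whole computation reduces to averaging phase factors term by term.

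For \eqref{supp:eq:1stmom}, term-by-term averaging kills every term with $E_k\neq E_l$; by Assumption~\ref{spectrum}(A) the spectrum is non-degenerate, so $E_k=E_l$ forces $k=l$, leaving $\overline{\rho'_{nn}(\tau)}=\sum_k|U_{nk}|^2\rho^0_{kk}=\mu_n$ by the definition \eqref{supp:eq:def:mu}.

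For \eqref{supp:eq:2ndmom}, I would expand the product into the quadruple sum
\[
\rho'_{mm}(\tau)\rho'_{nn}(\tau)=\sum_{k,l,k',l'}U_{mk}U_{ml}^*U_{nk'}U_{nl'}^*\,\ee^{-\ii(E_k-E_l+E_{k'}-E_{l'})\tau}\,\rho^0_{kl}\rho^0_{k'l'},
\]
whose time average retains only the resonant terms satisfying $E_k-E_l=E_{l'}-E_{k'}$. The crux is to enumerate these via Assumption~\ref{spectrum}, splitting into two disjoint families. If $E_k=E_l$, non-degeneracy gives $k=l$ and hence also $E_{k'}=E_{l'}$ with $k'=l'$, so these terms factorize into $\big(\sum_k|U_{mk}|^2\rho^0_{kk}\big)\big(\sum_{k'}|U_{nk'}|^2\rho^0_{k'k'}\big)=\mu_m\mu_n$. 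If $E_k\neq E_l$, then $E_k-E_l=E_{l'}-E_{k'}\neq0$ and Assumption~\ref{spectrum}(B) forces $l'=k$ and $k'=l$ (and automatically $k\neq l$); Hermiticity of $\hat{\rho}^0$ converts $\rho^0_{kl}\rho^0_{lk}$ into $|\rho^0_{kl}|^2$, so this family sums to $\sum_{k\neq l}U_{mk}U_{ml}^*U_{nk}^*U_{nl}|\rho^0_{kl}|^2=\nu_{mn}$ as in \eqref{eq:def:nu_nm}. Adding the two contributions gives \eqref{supp:eq:2ndmom}.

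The calculation is essentially mechanical once the resonance bookkeeping is in place, so I do not expect a genuine obstacle; the only point that really demands care is the convergence and limit-interchange argument in the countably infinite-dimensional case, which is why I would settle it at the outset. A secondary subtlety is verifying that the two families of resonant terms are genuinely disjoint and exhaustive — in particular that the second family cannot also contain a term with $k=l$, which is immediate from $E_k\neq E_l$ but worth stating explicitly to justify that no cross terms are double-counted.
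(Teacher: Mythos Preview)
Your proposal is correct and follows essentially the same approach as the paper: expand $\rho'_{nn}(\tau)$ and its product, average the phase factors using Assumption~\ref{spectrum}, and identify the surviving terms. The paper packages the resonance bookkeeping as the Kronecker-delta identity $\overline{\ee^{-\ii(E_i-E_j-E_k+E_l)\tau}}=\delta_{ij}\delta_{kl}+\delta_{ik}\delta_{jl}-\delta_{ij}\delta_{jk}\delta_{kl}$, whose three terms correspond exactly to your two disjoint families (with the third delta subtracting the overcount at $i=j=k=l$); your explicit convergence argument for the countably infinite case is an addition the paper omits.
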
 
\begin{proof}
The left-hand sides of Eqs.~\eqref{supp:eq:1stmom} and \eqref{supp:eq:2ndmom} are
\begin{align}
\overline{\rho'_{nn}(\tau)}
&=\sum_{k,l}
 U_{nk}  U^*_{nl} \overline{\ee^{-\ii ( E_k -E_l)\tau }} \rho^0_{kl}; \label{supp:eq:17-1}\\
\overline{\rho'_{mm}(\tau)\rho'_{nn}(\tau)}
&=\sum_{i,j,k,l} 
 U_{mi}  U^*_{mj}   U_{nk}^* U_{nl}\notag\\
&\qquad\times \overline{\ee^{-\ii ( E_i -E_j-E_k+E_l)\tau }} \rho^0_{ij}(\rho^0_{kl})^*. \label{supp:eq:17-2}
\end{align}
Assumption~1 in the manuscript leads to
\begin{align}\label{supp:eq:LTA:phase-1}
 \overline{\ee^{-\ii (E_k -E_l)\tau }} &=\delta_{kl};\\
 \overline{\ee^{-\ii (E_i -E_j-E_k+E_l)\tau }}
 &=\delta_{ij}\delta_{kl} + \delta_{ik}\delta_{jl} - \delta_{ij}\delta_{jk}\delta_{kl}.\label{supp:eq:LTA:phase-2}
\end{align}
Substituting Eq~\eqref{supp:eq:LTA:phase-1} in Eq.~\eqref{supp:eq:17-1}, we obtain Eq.~\eqref{supp:eq:1stmom}.
Similarly, with Eq~\eqref{supp:eq:LTA:phase-2} substituted in Eq.~\eqref{supp:eq:17-2},
the first term on the right-hand side of Eq.~\eqref{supp:eq:LTA:phase-2}
gives $\mu_m \mu_n$ and the second and third terms give $\nu_{mn}$.
Thus, we obtain Eq.~\eqref{supp:eq:2ndmom}.
\end{proof}

\begin{lemma}\label{supp:ineq:bound:munu}
\begin{align}
\nu_{mn} \le  \tilde{\nu}_{mn} \le \mu_m \mu_n, \label{supp:ineq:bound:nu}
\end{align}
where
\begin{align}
\tilde{\nu}_{mn} \equiv \sum_{k, l} 
 U_{mk}  U^*_{ml}   U_{nk}^* U_{nl} | \rho^0_{kl}|^2.\label{supp:eq:def:nu_bar}\end{align}
\end{lemma}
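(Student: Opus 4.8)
The plan is to prove the two inequalities in \eqref{supp:ineq:bound:nu} one at a time, using nothing beyond the positive-semidefiniteness of $\hat\rho^0$ together with the Cauchy--Schwarz and AM--GM inequalities. As a preliminary remark I would note that $\nu_{mn}$ and $\tilde\nu_{mn}$ are real: taking the complex conjugate of \eqref{eq:def:nu_nm} or \eqref{supp:eq:def:nu_bar}, relabelling the dummy indices $k\leftrightarrow l$, and using $\rho^0_{lk}=(\rho^0_{kl})^*$ reproduces the original sum, so the orderings in \eqref{supp:ineq:bound:nu} are meaningful.

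For the left inequality $\nu_{mn}\le\tilde\nu_{mn}$, I would simply compare the two definitions: $\tilde\nu_{mn}$ contains the same summand as $\nu_{mn}$ plus the diagonal contributions $k=l$. Each such contribution equals $|U_{mk}|^2|U_{nk}|^2(\rho^0_{kk})^2$, which is $\ge0$ because $\rho^0_{kk}\ge0$. Hence $\tilde\nu_{mn}=\nu_{mn}+\sum_k|U_{mk}|^2|U_{nk}|^2(\rho^0_{kk})^2\ge\nu_{mn}$.

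For the right inequality $\tilde\nu_{mn}\le\mu_m\mu_n$, I would introduce $c_k\equiv U_{mk}U_{nk}^*$, so that the summand of \eqref{supp:eq:def:nu_bar} factorizes as $c_k\,c_l^*\,|\rho^0_{kl}|^2$ and, since $\tilde\nu_{mn}$ is real, $\tilde\nu_{mn}\le\sum_{k,l}|c_k|\,|c_l|\,|\rho^0_{kl}|^2$ with $|c_k|=|U_{mk}|\,|U_{nk}|$. I would then bound each factor in a term: $|\rho^0_{kl}|^2\le\rho^0_{kk}\rho^0_{ll}$ by Eq.~\eqref{supp:eq:rho0ineq} (Cauchy--Schwarz for the positive-semidefinite $\hat\rho^0$), and $|U_{mk}||U_{nk}||U_{ml}||U_{nl}|\le\tfrac12\bigl(|U_{mk}|^2|U_{nl}|^2+|U_{ml}|^2|U_{nk}|^2\bigr)$ by AM--GM. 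Substituting both and summing over $k,l$, the $k\leftrightarrow l$ symmetry of the double sum makes the two halves each equal to $\sum_{k,l}|U_{mk}|^2|U_{nl}|^2\rho^0_{kk}\rho^0_{ll}=\mu_m\mu_n$, so the total is $\le\mu_m\mu_n$.

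I expect the only delicate point to be choosing the estimates in the last step so that the term-by-term bounds actually sum back up to $\mu_m\mu_n$ and not to something weaker; the particular pairing of Cauchy--Schwarz on $\hat\rho^0$ with AM--GM on the unitary amplitudes is what makes the two halves collapse cleanly. As a consistency check one can note that the Schur product theorem (the Hadamard square $(|\rho^0_{kl}|^2)$ of the positive-semidefinite matrix $(\rho^0_{kl})$ is again positive-semidefinite) gives $\tilde\nu_{mn}=\mathbf c^\dagger(|\rho^0_{kl}|^2)\mathbf c\ge0$ for free, but since this does not deliver the upper bound I would keep the elementary route above for $\tilde\nu_{mn}\le\mu_m\mu_n$.
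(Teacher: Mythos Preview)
Your proof is correct and follows essentially the same route as the paper: the first inequality is handled identically, and for the second you use positive-semidefiniteness of $\hat\rho^0$ together with AM--GM on the unitary amplitudes, exactly as the paper does via the equivalent inequality $|U_{mk}U_{nl}-U_{ml}U_{nk}|^2\ge0$. The only cosmetic difference is that you first pass to absolute values $|c_k||c_l|$ before applying AM--GM, whereas the paper applies $2\,\mathrm{Re}[ab^*]\le|a|^2+|b|^2$ directly; these are the same step.
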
 

\begin{proof}
First, the inequality $\nu_{mn} \le  \tilde{\nu}_{mn}$ follows from
$\tilde{\nu}_{mn}-\nu_{mn} =\sum_k |U_{mk}|^2 |U_{nk}|^2 (\rho^0_{kk})^2 \ge 0 $.

Second, we prove the inequality $\tilde{\nu}_{mn} \le \mu_m \mu_n$.
We note
\begin{align}\label{supp:eq:19}
\tilde{\nu}_{mn} = \frac{1}{2}\sum_{k,l} 
(U_{mk}  U^*_{ml}   U_{nk}^* U_{nl} + \text{c.c.} )| \rho^0_{kl}|^2 .
\end{align}
Here we also note 
\begin{align}\label{supp:eq:Uineq}
U_{mk}  U^*_{ml}   U_{nk}^* U_{nl} + \text{c.c.} \le  |U_{mk}|^2 |U_{nl}|^2 +|U_{ml}|^2 |U_{nk}|^2,
\end{align}
which follows from $\left| U_{mk} U_{nl} - U_{ml} U_{nk} \right|^2 \ge 0$.
From \eqref{supp:eq:19} and \eqref{supp:eq:Uineq}, we obtain
\begin{align}
\tilde{\nu}_{mn} &\le \frac{1}{2}\sum_{k,l} 
(|U_{mk}|^2 |U_{nl}|^2 +|U_{ml}|^2 |U_{nk}|^2 )| \rho^0_{kl}|^2\\
&\le \sum_{k,l} 
|U_{mk}|^2 |U_{nl}|^2  \rho^0_{kk} \rho^0_{ll}\label{supp:ineq:24}
= \mu_m \mu_n,
\end{align}
where we used \eqref{supp:eq:rho0ineq} to obtain \eqref{supp:ineq:24}.
\end{proof}

\section{Proof of Theorem~1}\label{supp:proof:thm1}
To prove Theorem 1,
we begin with the following lemma.
\begin{lemma}\label{supp:KLdiv:ineq}
Let $D$ be an arbitrary positive integer
and $\{p_n\}$ and $\{q_n\}$ be sets of $D$ real numbers
satisfying $p_n\ge 0$ and $q_n >0$ for $n=1,2,\dots,D$
and $\sum_n p_n =\sum_n q_n =1$.
Then, the following inequalities hold:
\begin{equation}\label{eq:lemma1}
 0 \leq \sum_n  p_n  \ln \frac{p_n}{q_n} \leq \sum_n  \frac{p_n^2}{q_n} -1.
\end{equation}
\end{lemma}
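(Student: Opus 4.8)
The plan is to derive both inequalities in \eqref{eq:lemma1} from the single elementary estimate $\ln x \le x-1$, valid for every $x>0$, together with the convention $0\ln 0 = 0$, which disposes of any index with $p_n=0$: such an index contributes $0$ to the middle expression, and its contribution $p_n^2/q_n = 0$ to $\sum_n p_n^2/q_n$ as well, so it does not affect either bound. Hence in what follows it suffices to treat the indices with $p_n>0$.

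For the lower bound I would apply $\ln x\le x-1$ with $x=q_n/p_n>0$. This gives $-\ln(q_n/p_n)\ge 1-q_n/p_n$, hence $p_n\ln(p_n/q_n)\ge p_n-q_n$; summing over $n$ and using $\sum_n p_n=\sum_n q_n=1$ yields $\sum_n p_n\ln(p_n/q_n)\ge 0$. This is simply Gibbs' inequality, i.e. Jensen's inequality for the strictly convex function $-\ln$, and could alternatively be cited as such.

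For the upper bound I would apply the same estimate with $x=p_n/q_n>0$, obtaining $\ln(p_n/q_n)\le p_n/q_n-1$; multiplying by $p_n\ge 0$ gives $p_n\ln(p_n/q_n)\le p_n^2/q_n-p_n$, and summing over $n$ with $\sum_n p_n=1$ gives $\sum_n p_n\ln(p_n/q_n)\le \sum_n p_n^2/q_n-1$, which is exactly the right-hand inequality in \eqref{eq:lemma1}.

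I do not expect any real obstacle: the whole argument consists of two applications of $\ln x\le x-1$ followed by a summation, and the only slightly delicate point is the bookkeeping for vanishing $p_n$, handled by the convention above. It is worth noting that $\sum_n p_n^2/q_n-1$ is precisely the $\chi^2$-divergence of $\{p_n\}$ from $\{q_n\}$, so Lemma~\ref{supp:KLdiv:ineq} sandwiches the relative entropy between $0$ and the $\chi^2$-divergence; this is the form in which it will be used to prove Theorem~1, taking $q_n=\mu_n$ and $p_n=\rho'_{nn}(\tau)$, after which the time average of the $\chi^2$ term is evaluated via Lemma~\ref{supp:lem:2ndmom} and bounded using Lemma~\ref{supp:ineq:bound:munu}.
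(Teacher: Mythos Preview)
Your proof is correct and follows exactly the paper's approach: both inequalities are obtained termwise from $\ln x\le x-1$ applied at $x=q_n/p_n$ and $x=p_n/q_n$ (for $p_n>0$), then summed using $\sum_n p_n=\sum_n q_n=1$, with the $p_n=0$ case handled trivially. Your remark interpreting the upper bound as the $\chi^2$-divergence and its use in Theorem~1 is also accurate.
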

\begin{proof}
We can show
\begin{align}
p_n - q_n \leq p_n \ln \frac{p_n}{q_n} \leq  \frac{p_n^2}{q_n} - p_n, \label{eq:lemma1-sub}
\end{align}
which leads to \eqref{eq:lemma1} when summed over $n$,
as follows.
Since \eqref{eq:lemma1-sub} trivially holds for $p_n=0$,
we assume $p_n>0$.
We recall that $\ln x \leq x-1$ holds for $x>0$.
Substituting $p_n/q_n$ and $q_n/p_n$ for $x$,
we have $1 - \frac{q_n}{p_n} \leq  \ln \frac{p_n}{q_n} \leq  \frac{p_n}{q_n} - 1$,
which leads to \eqref{eq:lemma1-sub}.
\end{proof}

Now we prove Theorem~1.
\begin{proof}[Proof of Theorem~1]
We substitute $\{ \rho'_{nn}(\tau) \}$ and $\{ \mu_n \}$ for $\{p_n \}$ and $\{q_n\}$, respectively, in Lemma~\ref{supp:KLdiv:ineq},
obtaining
\begin{align}\label{supp:eq:before_time_average}
0\le -\sum_n \rho'_{nn}(\tau) \ln \mu_n - S'(\tau) \le \sum_n \frac{\rho'_{nn}(\tau)^2}{\mu_n} -1,
\end{align}
where we used Eq.~\eqref{supp:eq:def:entropy}.
By taking the long-time averages of each side of \eqref{supp:eq:before_time_average}
and using Eq.~\eqref{supp:eq:1stmom},
we have
\begin{align}\label{supp:eq:ineq1}
0 \le S'_\text{DE} - \overline{S'(\tau)} \le \sum_n \frac{\overline{\rho'_{nn}(\tau)^2}}{\mu_n} -1,
\end{align}
where
$S'_\text{DE}  \equiv -\sum_n \mu_n \ln \mu_n$.
The inequality $\overline{\rho'_{nn}(\tau)^2} \leq 2\mu_n^2$ follows from 
Lemmas~\ref{supp:lem:2ndmom} and~\ref{supp:ineq:bound:munu}.
From \eqref{supp:eq:ineq1} and $\sum_n \mu_n =1$, we obtain
\begin{align}
0 \le S'_\text{DE} - \overline{S'(\tau)} \le 1.\label{supp:eq:theorem1}
\end{align}

Finally, we discuss the two equalities in \eqref{supp:eq:theorem1}.
The equality on the right-hand side of \eqref{supp:eq:theorem1}
does not hold.
If the equality holds, it follows that $\nu_{nn}= \mu_n^2$ for any $n$.
This leads to $\sum_k |U_{nk}|^4 (\rho^0_{kk})^2=0$ and hence $\mu_n=0$ for any $n$,
which contradicts $\mu_n>0$.
The equality for the inequality on the left-hand side of Eq~\eqref{supp:eq:theorem1}
holds, for example, if the initial state $\hat{\rho}^0$ is diagonal in the eigenbasis of $\hat{H}$
or if the transition matrix $U_{mn}$ is diagonal.
\end{proof}

We mention that the term in the middle of \eqref{eq:lemma1} is known as
the relative entropy or the Kullback-Leibler divergence:
$D_{\text{KL}}( \{p_n\} || \{q_n\} ) \equiv \sum_n p_n  \ln (p_n/q_n)$.  
Then, the following relation holds:
\begin{align}
\overline{D_{\text{KL}}(\{ \rho'_{nn}(\tau)\}|| \{ \mu_n \})} = S'_\text{DE} - \overline{S'(\tau)}.
\end{align}


\section{Proof of Theorem~2}\label{supp:proof:thm2}
We first show the following lemma.
\begin{lemma}\label{supp:lemma:deviation}
\begin{align}
\overline{\left[ S'(\tau) - S_\text{DE}' \right]^2} \le 2S_\text{DE}' +R,\label{supp:eq:deviation}
\end{align}
where 
\begin{align}
R\equiv \sum_{m,n} \nu_{mn} \ln \mu_m \ln \mu_n.\label{supp:eq:Rdef}
\end{align}
\end{lemma}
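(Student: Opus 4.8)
The plan is to bound the nonlinear quantity $\overline{[S'(\tau)-S_\text{DE}']^2}$ by replacing the factor $\ln\rho'_{nn}(\tau)$ that appears inside $S'(\tau)$ with the time-independent weight $\ln\mu_n$. This converts $S'(\tau)^2$ into a quadratic form in the diagonal populations $\rho'_{nn}(\tau)$ whose long-time average is already delivered by Lemma~\ref{supp:lem:2ndmom}, while the overshoot made in the replacement is one-sided and is absorbed at the end using Theorem~1.

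Concretely, I would introduce the auxiliary function $A(\tau)\equiv-\sum_n\rho'_{nn}(\tau)\ln\mu_n$. Since $0<\mu_n\le1$ and $\rho'_{nn}(\tau)\ge0$, every term is non-negative, so $A(\tau)\ge0$; and $S'(\tau)\ge0$ because it is a Shannon entropy. The left inequality of Lemma~\ref{supp:KLdiv:ineq}, used with $p_n=\rho'_{nn}(\tau)$ and $q_n=\mu_n$, is exactly the non-negativity of the relative entropy $D_\text{KL}(\{\rho'_{nn}(\tau)\}\|\{\mu_n\})=A(\tau)-S'(\tau)$. Hence $0\le S'(\tau)\le A(\tau)$, and since $x\mapsto x^2$ is nondecreasing on $[0,\infty)$ this gives $S'(\tau)^2\le A(\tau)^2$ for every $\tau$.

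Now I would take the long-time average. Writing $A(\tau)^2=\sum_{m,n}\rho'_{mm}(\tau)\rho'_{nn}(\tau)\ln\mu_m\ln\mu_n$ and inserting $\overline{\rho'_{mm}(\tau)\rho'_{nn}(\tau)}=\mu_m\mu_n+\nu_{mn}$ from Lemma~\ref{supp:lem:2ndmom},
\begin{align}
\overline{A(\tau)^2}
&=\sum_{m,n}(\mu_m\mu_n+\nu_{mn})\ln\mu_m\ln\mu_n\notag\\
&=(S_\text{DE}')^2+R ,\notag
\end{align}
where I used $\sum_n\mu_n\ln\mu_n=-S_\text{DE}'$ and the definition of $R$. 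Therefore $\overline{S'(\tau)^2}\le(S_\text{DE}')^2+R$, and expanding the square gives $\overline{[S'(\tau)-S_\text{DE}']^2}=\overline{S'(\tau)^2}-2S_\text{DE}'\,\overline{S'(\tau)}+(S_\text{DE}')^2\le R+2S_\text{DE}'(S_\text{DE}'-\overline{S'(\tau)})$. Finally Theorem~1 gives $0\le S_\text{DE}'-\overline{S'(\tau)}\le1$ and $S_\text{DE}'\ge0$, so the last term is at most $2S_\text{DE}'$, which is \eqref{supp:eq:deviation}.

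The argument is short; the one real idea is the sandwich $0\le S'(\tau)\le A(\tau)$, which makes it legitimate to pass from the genuinely nonlinear $S'(\tau)^2$ to the quadratic form $A(\tau)^2$ whose $\tau$-average Lemma~\ref{supp:lem:2ndmom} evaluates in closed form. The points needing a moment's care are minor: that $A(\tau)\ge0$ (because each $\ln\mu_n\le0$), so that the inequality $S'(\tau)\le A(\tau)$ survives squaring; and that the dephasing averages in Lemma~\ref{supp:lem:2ndmom} may be taken term by term under Assumption~1, exactly as in the proof of Theorem~1. No handle on the off-diagonal structure of $\hat{\rho}^0$ beyond what Lemmas~\ref{supp:lem:2ndmom} and~\ref{supp:KLdiv:ineq} already supply is needed.
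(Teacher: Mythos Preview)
Your proof is correct and follows essentially the same route as the paper: introduce $A(\tau)=-\sum_n\rho'_{nn}(\tau)\ln\mu_n$, use the non-negativity of the relative entropy (left inequality of Lemma~\ref{supp:KLdiv:ineq}) to obtain $0\le S'(\tau)\le A(\tau)$ and hence $S'(\tau)^2\le A(\tau)^2$, evaluate $\overline{A(\tau)^2}=(S_\text{DE}')^2+R$ via Lemma~\ref{supp:lem:2ndmom}, and finish by expanding the square and invoking the bound $S_\text{DE}'-\overline{S'(\tau)}\le1$ from Theorem~1. Your write-up is in fact a bit more explicit than the paper's about why the squaring step is legitimate (both sides non-negative), which is a nice touch.
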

\begin{proof}
From the inequality on the left-hand side in \eqref{supp:eq:before_time_average}, we have
$0 \le S'(\tau) \le -\sum_n \rho'_{nn}(\tau) \ln \mu_n$.
Using Lemmas~\ref{supp:lem:2ndmom} and~\ref{supp:ineq:bound:munu}, we have
\begin{align}
\overline{S'(\tau)^2} &\le \sum_{m,n} \overline{\rho'_{mm}(\tau) \rho'_{nn}(\tau)} \ln \mu_m \ln \mu_n
\le (S'_\text{DE})^2 +R. \label{supp:eq:30}
\end{align}
Finally, using \eqref{supp:eq:theorem1} and \eqref{supp:eq:30}, we obtain
\begin{align}
\overline{\left[ S'(\tau) - S_\text{DE}' \right]^2}
&= \overline{S'(\tau)^2}-2\overline{S'(\tau)} S_\text{DE}' + (S_\text{DE}')^2\\
&\le 2S_\text{DE}' + R. 
\end{align}
\end{proof}

We note the following lemma, which is similar to Chebyshev's inequality.
\begin{lemma}\label{supp:lemma:chebyshev}
\begin{align}
{\rm Prob} \left[    \left| S'_\text{DE} - S' (\tau)  \right| \geq C     \right]
    \leq  \frac{\overline{\left[ S'(\tau) - S_\text{DE}' \right]^2}}{C^2}, \label{supp:eq:chebyshev}
\end{align}
where $C$ is an arbitrary positive real number. 
\end{lemma}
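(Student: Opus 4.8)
The plan is to reproduce the elementary proof of Chebyshev's inequality, with the expectation over a probability space replaced by the long-time average $\overline{\;\cdot\;}$ that defines ${\rm Prob}$. First I would fix $C>0$ and introduce the nonnegative function $g(\tau)\equiv\left[S'(\tau)-S'_\text{DE}\right]^2$ together with the indicator $\chi(\tau)$ of the event $\{|S'_\text{DE}-S'(\tau)|\ge C\}$, i.e.\ $\chi(\tau)=1$ when $|S'_\text{DE}-S'(\tau)|\ge C$ and $\chi(\tau)=0$ otherwise. The one substantive observation is the pointwise bound $C^2\chi(\tau)\le g(\tau)$ valid for every $\tau\ge0$: on the set where $\chi(\tau)=1$ this is just $g(\tau)=|S'_\text{DE}-S'(\tau)|^2\ge C^2$, while on its complement the left-hand side vanishes and $g$ is nonnegative.

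Next I would apply the averaging operation $\lim_{T\to\infty}T^{-1}\int_0^T\dd\tau$ to both sides. By definition $\overline{\chi(\tau)}={\rm Prob}\left[|S'_\text{DE}-S'(\tau)|\ge C\right]$, so dividing by $C^2$ gives exactly \eqref{supp:eq:chebyshev}, provided the relevant averages exist. Here I would invoke Lemma~\ref{supp:lemma:deviation}, which guarantees $\overline{g(\tau)}=\overline{\left[S'(\tau)-S'_\text{DE}\right]^2}\le 2S'_\text{DE}+R<\infty$; together with $0\le C^2\chi(\tau)\le g(\tau)$ this shows that $\overline{\chi(\tau)}$ is well defined and finite, and that the inequality $C^2\,\overline{\chi(\tau)}\le\overline{g(\tau)}$ is inherited from the pointwise bound because $T^{-1}\int_0^T\dd\tau$ is linear and monotone on bounded functions.

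The argument is essentially routine; the only point requiring a little care is that $\overline{\;\cdot\;}$ is a Ces\`aro-type limit rather than a genuine integral, so monotonicity must be used at the level of the finite-$T$ averages before the limit is taken. Concretely, I would first establish $C^2\,T^{-1}\int_0^T\chi(\tau)\,\dd\tau\le T^{-1}\int_0^T g(\tau)\,\dd\tau$ for every finite $T$ and only then let $T\to\infty$; if one prefers to read the limit in ${\rm Prob}$ as a $\limsup$, the same chain of inequalities still delivers the stated bound. No interchange of limits beyond this is involved, so I do not anticipate any genuine obstacle. Finally, I would note that this lemma, combined with Lemma~\ref{supp:lemma:deviation} and the choice $C=S_0^{2/3}$, is precisely what feeds into the proof of Theorem~2.
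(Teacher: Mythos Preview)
Your proof is correct and follows essentially the same approach as the paper: define the indicator of the event, bound it pointwise by $[S'(\tau)-S'_\text{DE}]^2/C^2$, and take the long-time average. The paper's version is terser and does not discuss existence of the Ces\`aro limits or invoke Lemma~\ref{supp:lemma:deviation} for finiteness, so your added remarks on monotonicity at finite $T$ and well-definedness are extra care rather than a different route.
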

\begin{proof}
We define
\begin{equation}
  \theta(\tau) \equiv
  \begin{cases}
  1  \quad {\rm for} \,\, \left| S'_\text{DE} - S' (\tau)  \right| \geq C,  \\
  0  \quad {\rm for} \,\, \left| S'_\text{DE} - S' (\tau)  \right| < C,
\end{cases}
\end{equation}
which satisfies
\begin{equation}
    \theta(\tau) \leq  \frac{\left[S'_\text{DE} - S' (\tau)  \right]^2}{C^2}.\label{supp:eq:36}
\end{equation}
Taking the long-time averages of each side of \eqref{supp:eq:36},
we obtain \eqref{supp:eq:chebyshev}.
\end{proof}

\begin{proof}[Proof of Theorem~2.]
From Lemma~\ref{supp:lemma:deviation} and  Lemma~\ref{supp:lemma:chebyshev} with $C=S_0^{2/3}$,
we have
\begin{align}
{\rm Prob} \left[    \left| S'_\text{DE} - S' (\tau)  \right| \geq S_0^{2/3}     \right]
&\le \frac{2S'_\text{DE}}{S_0^{4/3}}+\frac{R}{S_0^{4/3}}.\label{supp:eq:38}
\end{align}
Then, we obtain
\begin{align}
1-\frac{2S'_\text{DE}}{S_0^{4/3}}-\frac{R}{S_0^{4/3}}
& \le {\rm Prob} \left[    \left| S'_\text{DE} - S' (\tau)  \right| < S_0^{2/3}     \right] \\
&\le {\rm Prob} \left[     S'_\text{DE} -S_0^{2/3} < S' (\tau)   \right] \\
&\le {\rm Prob} \left[     S_0 -S_0^{2/3} < S' (\tau)   \right], 
\end{align}
where we used $S_0 \le S'_\text{DE}$ to obtain the last inequality.
\end{proof}

\section{the exponential smallness of $R$}\label{supp:smallness_R}
In this section, we show that $R$ is exponentially small in the number of sites of a lattice system, $N$,
under Assumption 2, which states that an effective dimension is exponentially large in $N$:
\begin{align}
D_\text{eff} \equiv \left[ \sum_n (\rho^0_{nn})^2 \right]^{-1}= \ee^{O(N)}.\label{supp:eq:assumption_Deff}
\end{align}
This result is due to 
the following theorem.
\begin{thm}\label{supp:thm:bound_R}
\begin{align}\label{bound:R:result}
R \le \frac{5 (\ln D)^2}{D_\text{eff}^{1/2}} +\frac{8(\ln D)^2}{D} +\frac{4(\ln D)^2}{D^2}
\end{align}
holds for any integer $D \, (\ge 2)$.
\end{thm}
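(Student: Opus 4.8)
To prove Theorem~\ref{supp:thm:bound_R}, the plan is to bound $R$ by a purely ``incoherent'' quantity, recast that quantity as a weighted sum of squared matrix elements, use positive semidefiniteness to collapse the resulting double sum onto its diagonal, and finish with an elementary estimate of a single-index sum. First, since $0<\mu_m\le1$ gives $\ln\mu_m\le0$, we have $\ln\mu_m\ln\mu_n\ge0$, so the inequality $\nu_{mn}\le\tilde\nu_{mn}$ from Lemma~\ref{supp:ineq:bound:munu} yields $R\le\tilde R$ with $\tilde R\equiv\sum_{m,n}\tilde\nu_{mn}\ln\mu_m\ln\mu_n$. Inserting the definition~\eqref{supp:eq:def:nu_bar} of $\tilde\nu_{mn}$ and interchanging the order of summation, I would recast this as
\begin{align}
\tilde R=\sum_{k,l}|\rho^0_{kl}|^2\,|W_{kl}|^2,\qquad W_{kl}\equiv\sum_n U^*_{nk}U_{nl}\ln\mu_n ,
\end{align}
so that $W$ is the Hermitian matrix $U^\dagger L U$ obtained by conjugating $L\equiv\mathrm{diag}(\ln\mu_1,\dots,\ln\mu_D)$ with the unitary $(U_{mn})$; in particular $\sum_l|W_{kl}|^2=(W^2)_{kk}=\sum_n|U_{nk}|^2(\ln\mu_n)^2$.

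The crux is that the matrix $M$ with entries $M_{kl}=|\rho^0_{kl}|^2=\rho^0_{kl}\,\overline{\rho^0_{kl}}$ is the entrywise (Hadamard) product of the matrix $(\rho^0_{kl})$ with its complex conjugate, hence positive semidefinite by the Schur product theorem, with $\mathrm{tr}\,M=\sum_k(\rho^0_{kk})^2=D_\text{eff}^{-1}$. I would then use the elementary fact that, for any positive-semidefinite $M$ and any Hermitian $W$, $\sum_{k,l}M_{kl}|W_{kl}|^2\le\sum_k M_{kk}(W^2)_{kk}$: this follows by writing $M=\sum_\alpha\sigma_\alpha\ket{\psi_\alpha}\bra{\psi_\alpha}$ with $\sigma_\alpha\ge0$, applying the triangle inequality term by term in $\alpha$, bounding $|\psi_\alpha(k)\psi_\alpha(l)|\le\frac{1}{2}(|\psi_\alpha(k)|^2+|\psi_\alpha(l)|^2)$, symmetrizing using $|W_{kl}|=|W_{lk}|$, and resumming $\sum_\alpha\sigma_\alpha|\psi_\alpha(k)|^2=M_{kk}$. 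Since here $M_{kk}=(\rho^0_{kk})^2$, this gives
\begin{align}
R\le\tilde R\le\sum_k(\rho^0_{kk})^2(W^2)_{kk}=\sum_n(\ln\mu_n)^2\sum_k(\rho^0_{kk})^2|U_{nk}|^2 ,
\end{align}
where $\sum_k(\rho^0_{kk})^2|U_{nk}|^2\le\min\!\bigl(D_\text{eff}^{-1/2}\mu_n,\,D_\text{eff}^{-1}\bigr)$ by $\max_k\rho^0_{kk}\le[\sum_k(\rho^0_{kk})^2]^{1/2}=D_\text{eff}^{-1/2}$, $\sum_k|U_{nk}|^2\rho^0_{kk}=\mu_n$, and $|U_{nk}|^2\le1$.

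The rest is elementary. Using $\sum_k(\rho^0_{kk})^2|U_{nk}|^2\le D_\text{eff}^{-1/2}\mu_n$ for every $n$ already yields $R\le D_\text{eff}^{-1/2}\sum_n\mu_n(\ln\mu_n)^2$, and for a probability vector $(\mu_n)_{n=1}^D$ one shows $\sum_n\mu_n(\ln\mu_n)^2=O((\ln D)^2)$ by splitting the indices at $\mu_n=1/D$: on $\{\mu_n\ge1/D\}$ each term is at most $(\ln D)^2\mu_n$, and on $\{\mu_n<1/D\}$, when $1/D\le e^{-2}$ the function $x\mapsto x(\ln x)^2$ is increasing, so each of the at most $D$ such terms is at most $(1/D)(\ln D)^2$; for the finitely many $D$ with $1/D>e^{-2}$ (where this monotonicity fails) a direct crude estimate suffices. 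Splitting the $n$-sum according to whether $\mu_n$ exceeds $D_\text{eff}^{-1/2}$ and using the two alternatives in $\min(D_\text{eff}^{-1/2}\mu_n,D_\text{eff}^{-1})$ sharpens the leading constant; the precise numbers $5,8,4$ and the sub-leading $1/D$ and $1/D^2$ corrections come from tracking these constants and the small-$D$ cases so that the inequality holds for every integer $D\ge2$.

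The main obstacle is the diagonal collapse of the weight matrix. The naive bound $|\rho^0_{kl}|^2\le\rho^0_{kk}\rho^0_{ll}$ replaces $M$ by a rank-one-type object and leaves $\sum_k\rho^0_{kk}(W^2)_{kk}=\sum_n\mu_n(\ln\mu_n)^2$, which is only $O((\ln D)^2)$ with no suppression by $D_\text{eff}$; it is essential to use the genuine positive semidefiniteness of $M=(\rho^0_{kl})\circ\overline{(\rho^0_{kl})}$, not merely the nonnegativity of its entries, so that $M_{kk}=(\rho^0_{kk})^2$ rather than $\rho^0_{kk}$ becomes the effective weight and produces the decisive factor $D_\text{eff}^{-1/2}$. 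Everything after that is bookkeeping.
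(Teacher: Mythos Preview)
Your argument is correct and takes a genuinely different route from the paper's. Both begin with $R\le\tilde R$, but then diverge. The paper works on the $(m,n)$ side: it splits the index set at the threshold $\tilde\nu_{mn}=D_\text{eff}^{-1/2}\mu_m\mu_n$, uses the unitarity identity $\sum_{m,n}\tilde\nu_{mn}=D_\text{eff}^{-1}$ to show that the ``large-$\tilde\nu$'' set carries total weight $\sum\mu_m\mu_n\le D_\text{eff}^{-1/2}$, and then further splits at $\mu_m=D^{-2}$ to control the logarithms; the three terms $5/D_\text{eff}^{1/2}$, $8/D$, $4/D^2$ arise from those three sub-regions. You instead pass to the $(k,l)$ side, writing $\tilde R=\sum_{k,l}|\rho^0_{kl}|^2\,|W_{kl}|^2$ with $W=U^\dagger LU$, and then exploit the positive semidefiniteness of $M_{kl}=|\rho^0_{kl}|^2$ (Schur product) to collapse onto the diagonal; combined with $\max_k\rho^0_{kk}\le D_\text{eff}^{-1/2}$ this yields the single-term bound $R\le D_\text{eff}^{-1/2}\sum_n\mu_n(\ln\mu_n)^2$. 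Since $\sum_n\mu_n(\ln\mu_n)^2\le 2(\ln D)^2$ for $D\ge8$ and $\le 4De^{-2}\le 5(\ln D)^2$ for $2\le D\le7$, your bound is in fact strictly sharper than the stated one, so the theorem follows a fortiori. Your closing sentence suggesting that the specific constants $5,8,4$ and the $1/D$, $1/D^2$ corrections ``come from tracking these constants'' is not accurate---those three terms are artifacts of the paper's threshold decomposition and do not naturally appear in your approach---but this is harmless, since your one-term estimate already dominates all three. The trade-off: the paper's proof is entirely elementary, while yours is cleaner and tighter but relies on the Schur product theorem and a spectral decomposition of $M$.
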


Theorem~\ref{supp:thm:bound_R} ensures that $R$ is exponentially small in $N$
from Eq.~\eqref{supp:eq:assumption_Deff} and $D=d^N$.

\begin{proof}
Since Lemma~\ref{supp:ineq:bound:munu} leads to
\begin{align}
R \le \tilde{R} \equiv \sum_{m,n} \tilde{\nu}_{mn} \ln \mu_m \ln \mu_n,\label{supp:eq:R_Rbar}
\end{align}
it is sufficient to prove Theorem~\ref{supp:thm:bound_R} for $\tilde{R}$ instead of $R$.
We separate $\tilde{R}$ into two parts as
\begin{align}
\tilde{R}
= \sum_{(m,n)\in A\setminus B}\tilde{\nu}_{mn}\ln \mu_m \ln \mu_n + \sum_{(m,n)\in B}\tilde{\nu}_{mn}\ln \mu_m \ln \mu_n, \label{supp:eq:separationR}
\end{align}
where the sets of indices, $A$ and $B$, are defined as
\begin{align}
A &\equiv \{ (m,n) \in \{1,\dots,D\} \times \{1,\dots,D\} \};\\
B &\equiv \{ (m,n)\in A \ | \  D_\text{eff}^{-1/2}\mu_m \mu_n \le \tilde{\nu}_{mn} \le \mu_m \mu_n \}.
\end{align}
Since Lemma~\ref{supp:ineq:bound:munu} leads to
$\tilde{\nu}_{mn} \le D_\text{eff}^{-1/2}\mu_m \mu_n$ for $(m,n)\in A\setminus B$,
the first term on the right-hand side of Eq.~\eqref{supp:eq:separationR}
is bounded from above
\begin{align}
\sum_{(m,n)\in A\setminus B}\tilde{\nu}_{mn}\ln \mu_m \ln \mu_n
\le D_\text{eff}^{-1/2}S_\text{DE}'^2
\le D_\text{eff}^{-1/2}(\ln D)^2.
\end{align}
Here we used $S_\text{DE}' \le \ln D$ in deriving the last inequality.
With Lemma~\ref{supp:ineq:bound:munu},
the second term on the right-hand side of Eq.~\eqref{supp:eq:separationR}
is bounded from above
\begin{align}
\sum_{(m,n)\in B}\tilde{\nu}_{mn}\ln \mu_m \ln \mu_n
 \le \sum_{(m,n)\in B}\mu_m \mu_n\ln \mu_m \ln \mu_n.
\end{align}
Defining three subsets of $B$ as
\begin{align}
B_1 &= \{ (m,n)\in B \ | \ \mu_m \ge D^{-2} \ \text{and} \ \mu_n \ge D^{-2} \},\\
B_2 &= \{ (m,n)\in B \ | \ \mu_m \ge D^{-2} \ \text{and} \ \mu_n < D^{-2} \},\\
B_3 &= \{ (m,n)\in B \ | \ \mu_m < D^{-2} \ \text{and} \ \mu_n < D^{-2} \},
\end{align}
we accordingly separate $\sum_{(m,n)\in B}\mu_m \mu_n\ln \mu_m \ln \mu_n$ into three parts $\tilde{R}_1 + 2 \tilde{R}_2 + \tilde{R}_3$, where
\begin{align}
\tilde{R}_\alpha \equiv \sum_{(m,n)\in B_\alpha}\mu_m \mu_n\ln \mu_m \ln \mu_n \quad (\alpha=1,2,3).
\end{align}
Then,
\begin{align}
\tilde{R} \le D_\text{eff}^{-1/2}(\ln D)^2 + \tilde{R}_1 + 2 \tilde{R}_2 + \tilde{R}_3. \label{supp:eq:separationRB}
\end{align}

We give upper bounds for each $\tilde{R}_\alpha$ $(\alpha=1,2,3)$.

First, since $\ln \mu_m \ln \mu_n \le 4(\ln D)^2$ for $(m,n)\in B_1$,
we have
\begin{align}
\tilde{R}_1 \le 4(\ln D)^2 \sum_{(m,n)\in B_1}\mu_m \mu_n
\le 4D_\text{eff}^{-1/2} (\ln D)^2,\label{supp:bound:B1}
\end{align}
where the last inequality holds if the following inequality is satisfied:
\begin{align}\label{supp:bound:weight_in_B}
\sum_{(m,n)\in B} \mu_m \mu_n \le D_\text{eff}^{-1/2}.
\end{align}
To show this,
we sum $D_\text{eff}^{-1/2} \mu_m \mu_n \le \tilde{\nu}_{mn}$
over $m$ and $n$ such that $(m,n)\in B$, obtaining
\begin{align}\label{supp:eq:51}
D_\text{eff}^{-1/2} \sum_{(m,n)\in B} \mu_m \mu_n
\le \sum_{(m,n)\in B} \tilde{\nu}_{mn}
\le  \sum_{m,n} \tilde{\nu}_{mn}=D_\text{eff}^{-1},
\end{align}
where the last equality follows from $\sum_m U_{mk}U_{ml}^* = \delta_{kl}$.
Multiplying \eqref{supp:eq:51} by $D^{1/2}_\text{eft}$,
we obtain \eqref{supp:bound:weight_in_B}.

Second, noting that $-\mu_n \ln \mu_n \le 2 D^{-2} \ln D$ since $\mu_n < D^{-2} \leq 4$,
we have
\begin{align}
\tilde{R}_2 &\le \frac{4(\ln D)^2}{D^2} \sum_{(m,n)\in B_2} \mu_m 
\le \frac{4(\ln D)^2}{D}.\label{supp:bound:B2}
\end{align}
Here, the last inequality holds because $\sum_{(m,n)\in A} \mu_m =D$.

Finally, as we did for $\tilde{R}_2$,
we have
\begin{align}
\tilde{R}_3 &\le \frac{4(\ln D)^2}{D^4} \!\! \sum_{(m,n)\in B_3} \!\! 1 
\le  \frac{4(\ln D)^2}{D^2}.\label{supp:bound:B3}
\end{align}
Here, the last inequality follows from $\sum_{(m,n)\in A} 1 =D^2$.

Combining \eqref{supp:eq:R_Rbar} \eqref{supp:eq:separationRB}, \eqref{supp:bound:B1}, \eqref{supp:bound:B2}, and \eqref{supp:bound:B3},
we obtain \eqref{bound:R:result}.
\end{proof}

\end{document}